\documentclass[11pt,a4paper]{article}

\NeedsTeXFormat{LaTeX2e}

\usepackage[cp1251]{inputenc}
\usepackage[T2A]{fontenc}
\usepackage[russian]{babel}

\usepackage{hyperref}
\usepackage[dvips]{graphicx}
\usepackage{amsmath}
\usepackage{amssymb}
\usepackage{amsxtra}
\usepackage{amsthm}
\usepackage{latexsym}
\usepackage{array}
\usepackage{multirow}
\usepackage{hhline}

\setlength{\topmargin}{20mm} \voffset -1in
\setlength{\headheight}{0pt} \setlength{\headsep}{0pt}
\setlength{\oddsidemargin}{25mm} \hoffset -1in
\setlength{\textwidth}{17cm} \setlength{\textheight}{24.5cm}

\newtheorem{theorem}{Теорема}
\newtheorem{defin}{Определение}
\newtheorem{remark}{Замечание}
\newtheorem{proposition}{Предложение}
\newtheorem{lemma}{Лемма}

\numberwithin{equation}{section}

\newcounter{myta}
\newcommand{\myt}{\refstepcounter{myta}\themyta}

\newcommand {\A} {\mathop{\rm Acc}\nolimits}
\newcommand {\rk} {\mathop{\rm rank}\nolimits}
\newcommand {\lsgn} {\mathop{\rm bsgn}\nolimits}
\newcommand {\ri} {\mathrm{i}}
\newcommand {\mbf}[1] {\mathbf{#1}}

\newcommand {\mF}{\mathcal{F}}
\newcommand {\mB}{\mathcal{B}}
\newcommand {\mC}{\mathcal{K}}

\newcommand {\msim}[1] {\mathop \sim \limits^{#1}}
\newcommand {\fts}[1] {{\small #1}}
\newcommand {\ts}[1] {\textsl{#1}}
\newcommand {\bT}{\mathbf{T}}
\newcommand {\bR}{\mathbb{R}}
\newcommand {\ds}{\displaystyle}
\newcommand {\mstrut}{\vphantom{\bigl(}}


\begin{document}
\begin{center}

\Large{{\bf Топологический анализ и булевы функции. \\I. Методы и
приложения к классическим системам}}

\vspace{5mm}

\normalsize

{\bf М.П.\,Харламов}

\vspace{4mm}

\small

Волгоградская академия государственной службы

Россия, 400131, Волгоград, ул. Гагарина, 8

E-mail: mharlamov@vags.ru
\end{center}

\begin{flushright}
{\it Получено 27 мая 2010 г.}
\end{flushright}

\vspace{3mm}

{ \footnotesize Рассматривается задача полной формализации грубого
топологического анализа интегрируемых гамильтоновых систем при
наличии аналитического решения, в котором как правые части
дифференциальных уравнений для вспомогательных переменных, так и
исходные фазовые переменные выражаются рациональными функциями, а
значит, и полиномами от некоторого набора радикалов, каждый из
которых зависит только от одной переменной. Приведен способ сведения
задач определения допустимых областей констант первых интегралов,
промежутков осцилляции разделенных переменных и количества связных
компонент интегральных многообразий и критических интегральных
поверхностей к алгоритмам обработки таблиц некоторых булевых
вектор-функций и приведения матриц линейных булевых вектор-функций к
каноническому виду. С этой точки зрения рассмотрены топологически
наиболее богатые классические задачи динамики твердого тела. Новые
интегрируемые задачи будут рассмотрены в продолжении.

}

\normalsize

\vspace{3mm} Ключевые слова: алгебраическое разделение переменных,
интегральные многообразия, булевы функции, топологический анализ,
алгоритмы

\vspace{6mm}

\begin{center}

\large

{\bf M.P.~Kharlamov}

{\bf Topological analysis and Boolean functions. I. Methods and
application to classical systems}

\normalsize
\end{center}

\vspace{3mm}

{\footnotesize We aim to completely formalize the rough topological
analysis of integrable Hamiltonian systems admitting analytical
solutions such that the initial phase variables along with the time
derivatives of the auxiliary variables are expressed as rational
functions (in fact, as polynomials) in some set of radicals
depending on one variable each. We suggest a method to define the
admissible regions in the integral constants space, the segments of
oscillation of the separated variables and the number of connected
components of integral manifolds and critical integral surfaces.
This method is based on some algorithms of processing the tables of
some Boolean vector-functions and of reducing the matrices of linear
Boolean vector-functions to some canonical form. From this point of
view we consider here the topologically richest classical problems
of the rigid body dynamics. The article will be continued with the
investigation of some new integrable problems.

}

\vspace{3mm} Keywords: algebraic separation of variables, integral
manifolds, Boolean functions, topological analysis, algorithms

Mathematical Subject Classification 2000: 70E17, 70G40

\newpage

\tableofcontents

\section{Введение}\label{sec1}
Исторически при использовании дифференциальных уравнений первичной
задачей ставилось их интегрирование, построение явных решений.
Открытие явления принципиальной неинтегрируемости сдвинуло акценты в
исследованиях в сторону качественных методов. Однако в последние
десятилетия, благодаря активному развитию алгебраических подходов к
обнаружению явления интегрируемости динамических, и в первую
очередь, гамильтоновых систем,  было открыто много случаев
коммутативной и некоммутативной интегрируемости, значительная часть
которых относится к динамике твердого тела с неподвижной точкой и ее
обобщениям на группы Ли, отличные от $SO(3)$. В связи с этим вновь
отмечается значительный рост интереса математиков к интегрируемым
гамильтоновым системам. Достаточно отметить изданные совсем недавно
монографии оригинального и обзорного характера
\cite{BorMam0,BorMam,BorMam2,ReySemBk,TsiBk}, содержащие в списках
источников более тысячи наименований.

Относительно недавно открыты и интегрируемые системы с числом
степеней свободы три и выше, не имеющие явных симметрий и потому
несводимые в целом к системам с двумя степенями свободы (так
называемые неприводимые системы). Наиболее ярким примером такого
рода служит обобщенный гиростат Ковалевской -- гиростат,
удовлетворяющий динамическим соотношениям случая Ковалевской,
помещенный в двойное силовое поле. Эта система была обнаружена
благодаря усилиям целого ряда математиков -- О.И.\,Богоявленского
\cite{BogRus1,BogRus2,BogEn}, Х.\,Яхья \cite{Yeh1,Yeh2},
И.В.\,Комарова \cite{Komar}. Самый общий результат был получен
А.Г.Рейманом и М.А.\,Семеновым-Тян-Шанским \cite{ReySem,ReySemRus}.
Однако, как выяснилось, никаких явных решений построить в этом
случае не удается. Первый топологический результат -- описание
фазовой топологии системы, указанной Богоявленским, получен
Д.Б.\,Зотьевым \cite{ZotRCD}. Исследования волчка Ковалевской в
двойном поле, основанные на идее стратификации, построения и
изучения критических подсистем, начаты в работе \cite{Odin}. В итоге
это позволило построить для исходной задачи трехмерную
бифуркационную диаграмму \cite{KhRCD1} и приступить к трехмерной
классификации критических точек с целью построения инвариантного
описания фазовой топологии. Критические подсистемы оказались по
своей общности сопоставимыми со всей классической задачей
Ковалевской и обнаружили все черты систем, в которых выражения для
фазовых переменных и уравнения для разделенных переменных удается
получить алгебраическим путем. В настоящей работе мы называем такие
системы алгебраически разрешимыми. При наличие алгебраического
решения количество степеней свободы уже не является принципиальным.
Поэтому разработанный аппарат применим в равной степени к приводимым
и неприводимым системам.

В современной литературе много работ посвящено {\it алгебраически
интегрируемым} системам (algebraic complete integrable systems).
Этот термин не получил точного и однозначного определения -- его
соотносят, в основном, с гамильтоновыми системами, топология которых
явным образом связана с якобианами и многообразиями Прима
алгебраических поверхностей. Его, скорее, следовало бы перевести как
<<{\it алгебраические} вполне интегрируемые {\it системы}>>,
поскольку алгебраичность здесь -- свойство системы, а не способа
интегрирования. Алгебраические интегрируемые системы, в которых
получено фактическое разделение переменных, как правило,
удовлетворяют и требованию алгебраичности выражений исходных фазовых
переменных, хотя в современных работах, в отличие от классических,
именно этому вопросу не всегда уделяется внимание.

Полученные или анонсированные в последнее десятилетие разделения
переменных (см., например, работы
\cite{Sok1,Tsi1,Tsi3,Tsi4,Tsi5,Tsi6}) либо являются комплексными
(уравнения движения не разделяются в вещественной области, выражения
для исходных вещественных фазовых переменных не получены), либо
сводятся к задаче решения алгебраических уравнений высокой степени
(в последнем случае, как, например, в задаче Клебша, такие
свед\'{е}ния зачастую можно найти уже у классиков). Разделение
переменных в настоящем классическом понимании имеется в работе
\cite{Tsi2}, но оно требует вывода полных зависимостей исходных
фазовых переменных, поскольку, как нам представляется, только такие
зависимости (или хотя бы доказательство их существования через
оценки рангов отображений) определяют факт наличия явного решения.
Разделения переменных, связанные с алгебраическими кривыми,
порожденными представлениями Лакса, существование которых
теоретически доказывается, не дают способа установления связи с
исходными переменными. Другой подход -- идти от геометрии исходных
переменных и пытаться найти такие проекции фазового пространства, в
которых интегральные многообразия получили бы наиболее простой вид.
Если это удается (а общих способов для этого, к сожалению, не
существует), то связь с исходными переменными сохраняется
изначально. В двух из трех критических подсистем волчка Ковалевской
в двойном поле задача построения алгебраического решения с
разделением переменных решена \cite{KhSavDan,KhND06,KhRCD09}.

Поскольку новые решения приводят к задачам с высокими кратностями
накрытия фазовым пространством плоскости переменных разделения, то
необходимо развитие новых методов не только построения самих
разделений, но и грубого топологического анализа возникающих систем,
в которых простой перебор вариантов оказывается уже нерентабельным.
Мы не затрагиваем здесь более тонкие вопросы вычисления инвариантов
Фоменко~-- Цишанга и чисел вращения, характеризующие системы с
точностью до лиувиллевой или траекторной эквивалентности, хотя при
наличии явных зависимостей фазовых переменных от переменных
разделения (многозначного отображения, которое ниже мы называем
надстройкой) и эта задача решается \cite{BolFomRus}.

Практически для всех классических систем грубый топологический
анализ полностью завершен. Для интегрируемых задач динамики твердого
тела в поле сил с осесимметричным потенциалом это сделано прямыми
аналитическими вычислениями в работах
\cite{Kh76,Kh79,Kh831,Kh832,Kh84}. Полное изложение соответствующих
результатов имеется в \cite{KhBk}, а сами результаты включены,
например, в книгу \cite{BolFomRus}. Тем не менее, на протяжение
почти 20 последних лет регулярно появляются публикации, претендующие
на решение этих же проблем новыми, алгебраическими, методами. Чаще
всего это связано с попытками вывести универсальные способы описания
фазовой топологии исходной системы с помощью исследования свойств
алгебраических кривых, порожденных представлениями Лакса, избежав
при этом выписывания и исследования функций, задающих надстройку.
Очевидно, что полностью игнорировать эти зависимости нельзя.
Действительно, достаточно взять два экземпляра исходной системы,
имеющие одно и то же алгебраическое представление, чтобы получить
удвоенное количество связных компонент в составе интегральных
многообразий. Этот пример тривиален, но можно построить и более
сложные, например, в случае, когда фазовое пространство
неодносвязно. В целом, исследования такого рода важны с
математической точки зрения и способствуют развитию теории. Однако
многие из них страдают рядом принципиальных недостатков. Во-первых,
алгебраические методы применяются лишь к задачам, ранее решенным
иными, более простыми средствами. Ни одной новой интегрируемой
системы фактически исследовано не было. Во-вторых (и это более
важно, поскольку объясняет отсутствие новых результатов
применительно к динамике), на сегодня многочисленным исследователям
этого направления не удалось сформулировать и доказать каких-либо
универсальных теорем, дающих возможность получить описание фазовой
топологии без обращения к отображению надстройки. В своем стремлении
к алгебраизации процесса топологического анализа определенные группы
авторов работ этого направления вообще не приводят каких-либо
достаточно полных доказательных вычислений. Имеются явные ошибки и
заимствования. Отметим, например, статью \cite{Ouz}, которая
практически дословно излагает главу книги \cite{KhBk}, посвященную
исследованию гиростата Чаплыгина~-- Сретенского. Авторы \cite{Ouz}
еще в 2000 году были подробно информированы редакцией журнала
<<Reviews in Mathematical Physics>> о всех имеющихся в этой задаче
опубликованных результатах. Если учитывать, что статья \cite{Ouz}
возникла во французской школе Л.Н.\,Гаврилова, где читаются работы
на русском языке, то подобные публикации вызывают лишь недоумение.
Отметим, что автором из той же научной школы еще раньше, чем
\cite{Ouz}, были выпущены публикации \cite{Rand1,Rand}, в которых
новых результатов в топологии и динамике (по сравнению, например, с
\cite{KhBk}) также не содержится, но в \cite{Ouz} не упомянуты даже
эти работы.

Еще один пример -- книга \cite{Oden}, в которой в качестве авторов
утверждения о характере бифуркаций вдоль всех путей в пространстве
первых интегралов классического случая Ковалевской указаны Харламов,
Оден и Силол. При этом приводятся ссылки на работы \cite{Kh832} и
\cite{OdSil}. Не говоря уже о некоторой очевидной разнице в годах
публикации этих работ, заметим, что в работе \cite{OdSil} все
результаты о перестройках в случае Ковалевской в разделе <<La
methode classique>> взяты из работы \cite{Kh832} с соответствующей
ссылкой. Правда, этот раздел помещен {\it после} описания
бифуркаций, полученных <<новыми>> методами, что создает у читателя
ощущение и новизны самих бифуркаций. Так, например, рисунок
критической поверхности, представляющей собой нетривиальное
расслоение над восьмеркой со слоем окружность, в работе \cite{OdSil}
воспроизведен факсимильно из работы \cite{Kh832}, но как новый,
вдали от ссылок на первоисточник. Кроме того, в той же работе
\cite{OdSil} отмечается, что формула Comessatti \cite{Comes1},
примененная в работе \cite{FranSil} к случаю Ковалевской, привела к
результатам, отличным от результатов Харламова \cite{Kh832}, в связи
с чем необходимы дополнительные вычисления. Впрочем, строгих и
полных вычислений в \cite{Oden,OdSil,FranSil} так и не приводится,
что, как отмечалось, является типичным для упомянутых публикаций.
Отметим связанную с этим циклом чисто реферативную статью
\cite{Les1}, которая в аннотациях подается как совершенно
оригинальная; позже именно этой статье ее автор приписывает и само
происхождение термина <<полная алгебраическая интегрируемость>>
\cite{Les2}, хотя существует много публикаций основоположников
данной теории, в которых встречается этот термин (см., например,
название доклада \cite{VanM} и другие, ставшие уже классическими,
труды Адлера и ван Мёрбеке на эту же тему). Таким образом, несмотря
на высокий интерес в мире к новым результатам в этом направлении,
участие авторов статей \cite{Oden,OdSil,FranSil,Rand,Ouz} и других
подобных публикаций в описании фазовой топологии классических
случаев интегрируемости в динамике твердого тела и ряда их обобщений
весьма незначительно.

В настоящей работе предлагается способ алгоритмизации описания
фазовой топологии алгебраически разрешимых систем с помощью
использования булевых вектор-функций, связанных с накрытием фазовым
пространством пространства вспомогательных переменных, и фактически
описывающих точеч\-но-мно\-жест\-вен\-ную структуру отображения
надстройки над связной компонентой проекции каждого интегрального
многообразии. В предлагаемой здесь первой части работы рассмотрены
классические примеры. В частности, удается получить простые
доказательства утверждений, обоснование которых ранее требовало
технически сложных и весьма громоздких вычислений.

\section{Формализация топологического анализа}\label{sec2}
\subsection{Основные понятия}
Термин <<алгебраически разрешимая система>> не является
общепринятым. Поясним поэтому постановку задачи. Рассмотрим систему
обыкновенных дифференциальных уравнений на подмногообразии
$\mathcal{P}$ вещественного арифметического пространства
\begin{equation}\label{eq2_1}
\frac{d \mbf{x} }{dt} = {\mbf{X}}({\mbf{x}}),\quad \mbf{x} \in
\mathcal{P}
\end{equation}
относительно набора фазовых переменных ${\mbf{x}}$, допускающую
редукцию к системе уравнений вида
\begin{equation}\label{eq2_2}
\frac {ds_i } {d\tau } = \sqrt {\mstrut V_i (s_i; \mbf{f})}.
\end{equation}
Здесь $s_i$ -- вспомогательные переменные или переменные разделения
(вектор вспомогательных переменных обозначим через $\mbf{s}$),
функции $V_i (s;{\mbf{f}})$ -- многочлены от одной переменной $s$ (с
коэффициентами, зависящими от набора произвольных постоянных
${\mbf{f}}$), <<приведенное время>> $\tau$ связано с реальным
временем $t$ зависимостью ${d\tau}/{dt} = T(\mbf{s} ) > 0$. В
основном, такие разделения переменных имеют механическое
происхождение, в связи с чем уравнения (\ref{eq2_2}) имеют структуру
интеграла энергии. При этом предполагается, что все фазовые
переменные $x_j$ выражены через вспомогательные рациональными
функциями от набора радикалов вида
\begin{equation}\label{eq2_3}
R_{i\gamma }  = \sqrt {\mstrut s_i  - e_\gamma  } \quad \quad
(e_\gamma \in {\mathbb{C}})
\end{equation}
с коэффициентами, гладко зависящими от $\mathbf{s}$. В совокупность
чисел $\{ e_\gamma\}$, зависящих, конечно, от постоянных
${\mbf{f}}$, включим и корни многочленов $V_i$. Как показывают
классические примеры, именно корни $V_i $ и исчерпывают обычно все
множество~$\{e_\gamma\}$. Заметим, что частным случаем этой ситуации
являются и допускающие явные алгебраические решения системы с одной
степенью свободы (например, решение для угловых скоростей в случае
Эйлера) и многие точные решения в динамике твердого тела, сводящиеся
к зависящим от произвольного числа параметров семействам
периодических решений и их бифуркаций. В этих случаях
вспомогательная переменная одна, и в этом смысле <<переменные
разделены>>.

Для системы (\ref{eq2_1}) векторный параметр ${\mbf{f}}$ является
набором произвольных постоянных некоторой совокупности ${\mathcal
F}$ первых интегралов на $\mathcal{P}$ (общих или частных).
Зависимости
\begin{equation}\label{eq2_4}
{\mbf{x}} = {\mbf{x}}({\mbf{s}};{\mbf{f}}),\quad {\mbf{s}} \in
\A({\mbf{f}})
\end{equation}
представляют собой (многозначные) параметрические уравнения
интегрального многообразия ${\mathcal F}^{-1}({\mbf{f}})$. Здесь
$\A({\mbf{f}})$ -- область в пространстве вспомогательных
переменных, которая заполняется траекториями системы (\ref{eq2_2})
при заданном~${\mbf{f}}$. Ее традиционно называют не очень
элегантным термином <<область возможности движения>> (ОВД). В
английских переводах предложен более короткий термин -- достижимая
область (accessible region), который вполне отражает суть дела и
которым обусловлено обозначение области аргументов в (\ref{eq2_4}).
Само многозначное отображение (\ref{eq2_4}) будем называть
надстройкой над достижимой областью.

Пусть система (\ref{eq2_1}) есть гамильтонова вполне интегрируемая
система с двумя степенями свободы, имеющая интегральное отображение
$\mathcal{F}:\mathcal{P} \to {\bR}^2$ компактного характера. Две
степени свободы выбраны для простоты изложения, а также потому, что
в рассмотренных ниже примерах число степеней свободы не превосходит
двух.

\begin{defin}\label{thdef21}
Множество $\mathop{\rm Im}\nolimits \mathcal{F}$ назовем допустимой
областью $($в пространстве констант первых интегралов$).$
Соответственно, точка ${\mbf{f}} \in {\bR}^2 $ называется
допустимой, если интегральное многообразие $\mathcal{F}_\mbf{f} =
\mathcal{F}^{ - 1} (\mbf{f})$ не пусто.
\end{defin}

Разделение переменных -- это, вообще говоря, некоторое отображение
\begin{equation}\label{eq2_6}
\pi :\mathcal{P}  \to \bR^2(s_1 ,s_2 ).
\end{equation}
фазового пространства на плоскость вспомогательных переменных $(s_1
,s_2 )$, которое переводит систему (\ref{eq2_1}) в систему уравнений
(\ref{eq2_2}). Ясно, что при фиксированном ${\mbf{f}}$ мы имеем
вполне определенное отображение
\begin{equation}\label{eq2_7}
\pi _{\mbf{f}} :\mathcal{F}_{\mbf{f}}  \to \bR^2(s_1 ,s_2).
\end{equation}

\begin{defin}\label{thdef22}
Для заданного ${\mbf{f}}$  назовем достижимой областью образ
интегрального многообразия на плоскости вспомогательных переменных
$\A({\mbf{f}}) = \pi _{\mbf{f}} (\mathcal{F}_{\mbf{f}})$.
\end{defin}
Ясно, что $\A({\mbf{f}})$ есть подмножество в
\begin{equation}\label{eq2_8}
\{ (s_1 ,s_2 ):V_i (s_i ,{\mbf{f}}) \geqslant 0,i = 1,2\},
\end{equation}
причем это включение таково, что любая точка входит в
$\A({\mbf{f}})$ только вместе со всей своей связной компонентой в
множестве (\ref{eq2_8}). Множество $\A({\mbf{f}})$ есть,
следовательно, совокупность прямоугольников на плоскости $(s_1,s_2)$
(допускаются и <<бесконечные>> прямоугольники -- полуполосы и даже
квадранты). Тот факт, что достижимая область для всех постоянных
интегрирования есть прямоугольник (в указанном обобщенном смысле),
очевидно, является {\it необходимым} условием разделения переменных.
Вероятно, вооружившись определенными предположениями о
нетривиальности проекции вида (\ref{eq2_7}) или о существенной
зависимости уравнений границ достижимой области от параметров
$\mbf{f}$ в терминах неравенства нулю некоторых якобианов, можно
доказать и достаточность, но такая задача здесь не ставится.

Если ${\mbf{f}}$ -- допустимая точка и регулярное значение
интегрального отображения, то $\mF_{\mbf{f}}=\mF^{-1}(\mbf{f})$
состоит из конечного числа двумерных торов Лиувилля с
условно-пе\-ри\-оди\-чес\-ким движением. Задача грубого
топологического анализа практически сводится к следующим этапам:

-- вывод уравнений некоторого множества $\tilde \Sigma  \subset
\bR^2 $  (разделяющего множества), содержащего в себе бифуркационную
диаграмму $\Sigma $ интегрального отображения;

-- определение допустимой области $\mathop{\rm Im}\nolimits
\mathcal{F}$ и, как следствие, описание диаграммы $\Sigma  = \tilde
\Sigma  \cap \mathop{\rm Im}\nolimits \mathcal{F}$;

-- вычисление для каждой связной области в составе $\mathop{\rm
Im}\nolimits \mathcal{F}\, \backslash \Sigma $ количества связных
компонент многообразия $\mathcal{F}_{\mbf{f}}$ (это количество
неизменно, если ${\mbf{f}}$ не пересекает $\Sigma$);

-- определение топологического типа критических интегральных
поверхностей $\mathcal{F}_{\mbf{f}} $, ${\mbf{f}} \in \Sigma $.

Рассмотрим способы реализации этих этапов при наличии
алгебраического разделения переменных.

\subsection{Разделяющее множество и допустимая область} Фиксируем
${\mbf{f}}$, обозначим $v_i  = ds_i /d\tau $, и запишем уравнения
(\ref{eq2_2}) в виде
\begin{equation}\label{eq2_9}
v_i^2  = V_i (s_i ;{\mbf{f}}),\quad i = 1,2.
\end{equation}
Пусть многочлен $V(s;{\mbf{f}})$ -- наименьшее общее кратное
многочленов $V_i $. Назовем $V$ максимальным многочленом системы
(\ref{eq2_2}). <<Хорошие>> разделения переменных обладают тем
свойством, что при условии независимости первых интегралов,
составляющих $\mathcal{F}$, отображение многообразия
$\mathcal{F}_{\mathbf{f}}$ на множество (\ref{eq2_9}) регулярно, а
значит, бифуркациям интегральных многообразий обязательно
соответствует наличие особой точки на множестве (\ref{eq2_9}) в
пространстве $\bR^2(s_1 ,v_1)  \times \bR^2(s_2 ,v_2)$.
Следовательно, в качестве множества $\tilde \Sigma $ выступает
дискриминантная поверхность максимального многочлена. Точки $\tilde
\Sigma $ общего положения (при отсутствии каких-либо особых
вырождений) можно описать параметрическими уравнениями ${\mbf{f}} =
{\mbf{f}}(s)$, полученными из условий
\begin{equation}\label{eq2_10}
V(s;{\mbf{f}}) = 0,\quad V'_s (s;{\mbf{f}}) = 0.
\end{equation}

Если все $V_i$ одинаковы и равны $V$, то, считая $s,v \in
\mathbb{C}$, получим уравнение алгебраической кривой
$v^2=V(s;{\mbf{f}})$. Тогда (\ref{eq2_10}) -- условия наличия у этой
кривой особой точки. В системах более общего вида, связанных с
алгебраическими кривыми, например, через представление Лакса, особые
участки разделяющего множества (и, соответственно, бифуркационных
диаграмм) возникают при условиях приводимости алгебраических кривых.
В диаграммах на плоскости условия приводимости порождают, в
частности, изолированные точки (случай Клебша \cite{PogKh}, Лагранжа
\cite{Oshem1}, волчок Ковалевской в двойном поле \cite{KhRCD1}).

Определение допустимой области непосредственно связано с процессом
выявления условий вещественности зависимостей (\ref{eq2_4}),
описывающих отображение, обратное к (\ref{eq2_6}).

Пусть $\{ e_\gamma \}$ -- множество корней максимального многочлена.
Считаем пока, что все они вещественны. Что делать при наличии пар
комплексно сопряженных корней, скажем позже. Пусть система является
алгебраически разрешимой. Соответствующие радикалы (\ref{eq2_3})
назовем базисными. Многозначное отображение (\ref{eq2_4}) записано
по предположению рациональными функциями от базисных радикалов с
коэффициентами, гладко зависящими от $s_1 ,s_2 ,{\mbf{f}}$. Задача
сводится к определению тех прямоугольников (компонент) в составе
множества (\ref{eq2_8}), для которых значения надстройки
${\mbf{x}}(s_1 ,s_2 ;{\mbf{f}})$ являются вещественными. Без
ограничения общности можно рассматривать лишь случай отсутствия
кратных корней у максимального многочлена, так как при этом мы
определим внутренность допустимой области, а всю область получим,
переходя к замыканию. Для проверки условий вещественности, конечно,
достаточно взять точку прямоугольника, не являющуюся граничной, то
есть точку, в которой все~$V_i$ отличны от нуля.

Любую рациональную функцию от радикалов (\ref{eq2_3}) можно
преобразовать в многочлен от этих же радикалов с коэффициентами,
гладко зависящими от $s_1$, $s_2$ и ${\mbf{f}}$ (домножая числитель
и знаменатель на подходящие многочлены). Выделим множество мономов
$Z_1 ,...,Z_k$ (произведений радикалов (\ref{eq2_3}) в различных
сочетаниях), от которых исходные фазовые переменные ${\mbf{x}}$
зависят линейно
\begin{equation}\label{eq2_11}
{\mbf{x}} = {\mbf{a}}+ \sum {\mbf{b}_j} Z_j
\end{equation}
(${\mbf{a}}, {\mbf{b}_j}$ -- однозначные вектор-функции от
${\mbf{s}}$ и параметров ${\mbf{f}}$). Тогда условия вещественности
значений отображения (\ref{eq2_4}) можно записать в виде системы
неравенств
\begin{equation}\label{eq2_12}
\varepsilon _j Z_j^2  \geqslant 0,\quad \varepsilon _j^2  = 1\quad
(j = 1,...,k).
\end{equation}

Пусть $\mB = \{ 0,1\} $ -- булева пара.
\begin{defin}\label{thdefbs}
Булевым знаком назовем функцию $\lsgn : \bR \to \mB$, такую, что
\begin{equation}\notag
\lsgn (\theta ) = \left\{
\begin{array}{l}
{0,\quad \theta  \geqslant 0}  \\ {1,\quad \theta  < 0}
\end{array}
\right. .
\end{equation}
\end{defin}

Обозначая символом $\oplus$ сумму по модулю $2$, имеем свойство
\begin{equation}\label{eq2_13}
\lsgn (\theta _1 \theta _2 ) = \lsgn (\theta _1 ) \oplus \lsgn
(\theta _2 ).
\end{equation}

Пусть $U_1 ,...,U_N $  -- совокупность всех радикалов вида
(\ref{eq2_3}), участвующих в записи отображения (\ref{eq2_4}).
Соответствие знаков величин $Z_j^2 $  знакам величин $U_i^2 $
запишется некоторой булевой вектор-функцией с аргументами $u_i  =
\lsgn (U_i^2 )$ и значениями $z_j  = \lsgn (Z_j^2 )$, вычисляемой с
помощью операций $ \oplus $. Условия вещественности (\ref{eq2_12})
записываются в виде заданного единственного значения $\mbf{z}_0$
этой вектор-функции. Определив точки $\mbf{u}$ в ее прообразе,
получим соответствующие системы неравенств, содержащих переменные
$s_i$, которые необходимо затем разрешать относительно $s_i$ при
различных наборах параметров $\mbf{f}$.

Далее для термина <<булева вектор-функция>> будем иногда
использовать сокращение БВФ. Заметим, что термин <<значение>> в
применении к функциям всегда несет в себе двусмысленность (значение
как функция аргумента и конкретное значение-константа в точке).
Поэтому для булевых вектор-функций скалярную булеву функцию в
составе вектора значений будем называть компонентой. Термин <<булева
функция>> предполагает одну компоненту (скаляр).

Процесс решения полученных систем неравенств может быть упрощен,
если заранее известно, как расположены числа $e_\gamma$.
Действительно, если ${e_\gamma   < e_\delta}$, то из неравенства
${s_i - e_\delta>0}$ заведомо следует, что ${s_i - e_\gamma>0}$, и
соответствующие булевы аргументы (пусть это будут $u_\delta^{(i)}
,u_\gamma^{(i)}$) не произвольны, а связаны некоторым условием. Это
условие выглядит так: если $u_\delta^{(i)}  = 0$, то $u_\gamma^{(i)}
= 0$, или в терминах булевых функций
\begin{equation}\label{eq2_14}
(u_\gamma^{(i)}  \to u_\delta^{(i)} ) = 1
\end{equation}
(стрелкой обозначена, как обычно, импликация). Упорядочив все корни
$e_\gamma$ по возрастанию, добавим в компоненты вводимой БВФ для
каждой пары соседних корней $e_\gamma   < e_\delta$ булевы функции
вида ${u_\gamma^{(i)}  \to u_\delta^{(i)}}$, по одной для каждой
переменной $s_i$. Кроме того, часто бывает, что найденные выражения
для отображения \eqref{eq2_11} симметричны по переменным $s_1,s_2$,
или же, что из самого определения этих переменных вытекает
возможность условиться о неравенстве типа $s_2 < s_1$. Тогда для
любого корня $e_\gamma$ имеется свойство: из неравенства $s_2 -
e_\gamma>0$ заведомо следует, что $s_1 - e_\gamma>0$. Поэтому
соответствующие булевы аргументы, для примера
$u_{\gamma}^{(1)},u_{\gamma }^{(2)}$, связаны условием
\begin{equation}\label{eq2_15}
(u_{\gamma}^{(1)}  \to u_{\gamma}^{(2)} ) = 1,
\end{equation}
так что для всех $e_\gamma$ можно в компоненты вводимой БВФ добавить
функции вида ${u_{\gamma}^{(1)} \to u_{\gamma}^{(2)}}$. После этого
потребуем, чтобы в соответствии с (\ref{eq2_14}), (\ref{eq2_15}) все
компоненты-импликации равнялись 1.

Подчеркнем, что добавление компонент-импликаций лишь упрощает
нахождение областей $\A(\mbf{f})$, поскольку для того, чтобы
выписать все неравенства на предыдущем этапе, необходимо в любом
случае учесть взаимное расположение значений в множестве
$\{e_\gamma\}$. Добавление компонент-импликаций, если оно возможно,
позволяет достичь и большей формализации всех шагов.

Обозначим полученную в результате булеву вектор-функцию через
\begin{equation}\notag
{\mbf{z}} = {A}({\mbf{u}}),\quad {\mbf{u}} \in \mB^N ,\quad
{\mbf{z}} \in \mB^{k'} \qquad (k' \geqslant k).
\end{equation}
Условие существования вещественных значений отображения
(\ref{eq2_4}) при заданном наборе констант~${\mbf{f}}$ запишется в
виде условия на промежутки изменения $s_i $, определяющие достижимую
область $\A({\mbf{f}})$ в следующей простой форме ${\mbf{u}} \in
{A}^{ - 1} ({\mbf{z}}_0 )$, где ${\mbf{z}}_0  \in \mB^{k'}$ вполне
определено условиями (\ref{eq2_12}), (\ref{eq2_14}), (\ref{eq2_15}).

\subsection{Вычисление количества связных компонент} Фиксируем
параметры ${\mbf{f}}$, прямоугольник $\Pi$ в составе $\A({\mbf{f}})$
и поставим вопрос -- каково количество компонент связности в его
прообразе при отображении (\ref{eq2_4})? Пусть в
прямоугольнике~$\Pi$ (для примера, конечном)
\begin{equation}\label{eq2_16}
s_1  \in [e_\gamma, e_\delta],\quad s_2  \in [e_\lambda, e_\mu].
\end{equation}
Тогда вдоль любой траектории системы (\ref{eq2_2}), заключенной в
этих границах, периодически меняют знак радикалы
\begin{equation}\label{eq2_17}
R_{1\gamma } ,R_{1\delta } ,R_{2\lambda } ,R_{2\mu },
\end{equation}
а вместе с ними и все значения мономов $Z_j$, в которые входят эти
радикалы. Таким образом, все точки ${\mbf{x}}$, отличающиеся только
знаками указанных $Z_j$, будут принадлежать одной связной компоненте
$\mathcal{F}_{\mbf{f}}$. Фиксируем точку ${\mbf{s}} = (s_1 ,s_2 )$,
внутреннюю для $\Pi$. Мономы $Z_j$ следует выбирать <<разумно>>, то
есть так, чтобы зависимости \eqref{eq2_11} формально порождали
взаимно однозначное соответствие множества точек $\mbf{x}$,
накрывающих точку $\mbf{s}$, с множеством $\mB^k$ наборов знаков
величин $Z_j$ хотя бы для почти всех $\mbf{s} \in \Pi$. Отметим, что
$Z_j$ -- произведения радикалов, поэтому они могут быть, как и сами
радикалы (\ref{eq2_3}), вещественными или чисто мнимыми. Однако на
одной связной компоненте интегрального многообразия и, более того,
на одном выбранном прямоугольнике~$\Pi$ подкоренные выражения не
могут менять знак. Поэтому всегда можно <<подправить>> определения
базисных радикалов, меняя при необходимости знак подкоренного
выражения и домножая на нужное количество мнимых единиц
соответствующие коэффициенты в представлении \eqref{eq2_11}, так,
чтобы и радикалы, и коэффициенты на множестве \eqref{eq2_16} были
вещественными.

Определим теперь булеву вектор-функцию ${C}({\mbf{u}})$, сопоставив
набору булевых переменных $u_i  = \lsgn (U_i)$ набор булевых
переменных $z_j = \lsgn (Z_j)$. Отметим, что $Z_j$ выражаются через
$U_i$ по тем же правилам, по каким $Z_j^2$ выражаются через $U_i^2$.
Следовательно, если на этапе определения допустимого множества и
достижимых областей мы не использовали переменных, отличных от
${\mbf{x}}$, то новая БВФ окажется проекцией использованной на
предыдущем этапе булевой вектор-функции на первые координаты,
определенные операцией сложения по модулю~2 (игнорируются
компоненты, заданные импликацией). Кроме того, как следует из
свойства (\ref{eq2_13}), все компоненты -- линейные функции
аргументов $u_i$.

Выбрав некоторую область (\ref{eq2_16}), разобьем аргументы
${\mbf{u}}$ на две группы. В первую группу войдут булевы знаки
${\mbf{v}}\in \mB^m$ радикалов, не изменяющих знака вдоль
траектории, а во вторую группу ${\mbf{w}}\in \mB^n$ включим булевы
знаки радикалов, меняющих знак вдоль траектории периодически. В
данном примере ко второй группе относятся аргументы, отвечающие
радикалам (\ref{eq2_17}), в первую -- все остальные. Определенная
выше БВФ примет вид
\begin{equation}\label{eq2_18}
{C}: \mB^m \times \mB^n \to \mB^k.
\end{equation}

\begin{defin}\label{thdef1} Пусть задана булева вектор-функция
${B}: \mB^m \times \mB^n \to \mB^k$. Будем говорить, что элементы
$\mbf{v}',\mbf{v}'' \in \mB^m$ эквиваленты относительно ${B}$, если
существуют $\mbf{w}',\mbf{w}'' \in \mB^n$ такие, что
${B}({\mbf{v}'},{\mbf{w}'}) = {B}({\mbf{v}''},{\mbf{w}''})$.
\end{defin}

Эквивалентность относительно ${B}$ обозначим $ \mbf{v}' \msim{{B}}
\mbf{v}''$. Аргументы, входящие в состав вектора $\mbf{v}$ (классы
эквивалентности которого определяются), назовем аргументами первой
группы. Остальные будем называть аргументами второй группы.

Из построения отображения (\ref{eq2_18}) вытекает следующее
утверждение.
\begin{theorem}
Количество $c({\mbf{f}},\Pi)$ связных компонент многообразия
$\mathcal{F}_{\mbf{f}} $, накрывающих прямоугольник $\Pi$, равно
количеству классов эквивалентности в множестве $\mB^m$ по отношению
эквивалентности относительно функции ${C}$.
\end{theorem}

Очевидно, класс эквивалентности вектора ${\mbf{v}} \in \mB^m$
относительно ${C}$ определяется равенством
\begin{equation}\notag
\mC_{{C}}({\mbf{v}})  = p_1({C}^{-1}
(\{\mbf{z}={C}(\mbf{v},\mbf{w}): \mbf{w} \in \mB^n\})), \qquad
p_1(\mbf{v},\mbf{w}) \equiv \mbf{v}.
\end{equation}
Если все классы имеют одно и то же количество элементов
$\mathcal{N}({\mbf{f}},\Pi)$, то
\begin{equation}\notag
c({\mbf{f}},\Pi ) = 2^{m}/\mathcal{N}({\mbf{f}},\Pi).
\end{equation}

Нетрудно описать теперь сам алгоритм вычисления этого значения.
Составим таблицу булевой вектор-функции
${\mbf{z}={C}(\mbf{v},\mbf{w})}$, содержащую $2^N$ строк
($N=\dim(\mbf{v},\mbf{w})$). Назовем строку помеченной числом $i$,
если уже установлена ее принадлежность классу эквивалентности с
порядковым номером $i$. Пусть $i_0$ классов строк уже помечены.
Вводим пустые массивы $\mbf{K}$ и $\mbf{Z}$. Находим первую
непомеченную строку (если таких нет -- процесс завершен).
Присваиваем ей метку $i_1=i_0+1$, соответствующий вектор $\mbf{v}$
включаем в новый класс с номером ${i_1}$ (массив $\mbf{K}$), а
соответствующее значение $\mbf{z}$ -- в массив $\mbf{Z}$.
Просматриваем строки таблицы, начиная с текущей. Если строка
непомечена и у нее $\mbf{v}\in \mbf{K}$ или $\mbf{z}\in \mbf{Z}$, то
помечаем ее номером $i_1$, и если одно из включений не выполнено, то
пополняем соответствующий массив. При этом просмотр таблицы
повторяется до тех пор, пока происходит пополнение хотя бы одного из
массивов $\mbf{K}$ или $\mbf{Z}$ новыми элементами. Класс с номером
$i_1$ сформирован. Полагаем $i_0=i_1$ и повторяем процесс с ввода
пустых массивов.

\begin{remark}\label{rem1}
Мы рассмотрели случай, когда все корни в множестве $\{ e_\gamma\}$
вещественны. Что будет, если среди корней имеется пара комплексно
сопряженных $e_\delta   = \overline {e_\gamma} $? Тогда в
вещественных выражениях для переменных ${\mbf{x}}$ соответствующая
пара радикалов обязательно будет входить либо в виде произведения,
либо в некоторой комбинации вида
\begin{equation}\label{eq2_19}
c_\gamma  R_{i\gamma }  + c_\delta  R_{i\delta } ,\quad c_\gamma   =
\overline {c_\delta}.
\end{equation}
В первом случае в качестве базисных нужно взять радикалы
$$
\sqrt{s_i^2 - (e_\gamma+\overline {e_\gamma})s_i+e_\gamma \overline
{e_\gamma}}
$$
и считать их не меняющими знак. Во втором случае радикалы
$R_{i\gamma} ,R_{i\delta}$ не меняют знак вдоль траектории, но они
изначально комплексно сопряжены, поэтому выбор их знаков в начальный
момент времени на траектории не произволен, а подчинен условию
$R_{i\gamma} R_{i\delta} > 0$. Чтобы ему удовлетворить, достаточно
ввести в булеву вектор-функцию еще по одной компоненте вида
$u_\gamma^{(i)} \oplus u_\delta^{(i)}$ для каждого $s_i$ и
потребовать, чтобы они равнялись булевой константе~$0$. В алгоритме
прямого вычисления количества классов эквивалентности это, конечно,
равносильно отбрасыванию определенной части значений аргумента
${\mbf{u}}$, то есть установке фильтра в таблице, реализующей
функцию ${C}(\mbf{u})$. Произвол в выборе знака при этом сохранится
у выражения~$(\ref{eq2_19})$.
\end{remark}

Ниже мы развиваем определенную технику работы с БВФ для вычисления
количества классов эквивалентности. Она полностью автоматизирует
процесс исследования регулярных случаев.

Для критических интегральных поверхностей известные на данный момент
алгебраически разрешимые системы поддаются формальному анализу теми
же методами с несложными дополнительными рассуждениями. На сегодня
все основные типы бифуркаций в системах с двумя степенями свободы
известны. По сравнению с теми перестройками, которые были выявлены в
классических задачах динамики твердого тела \cite{KhBk}, никаких
новых не открыто (за исключением неориентируемого случая). Несмотря
на очевидную возможность возникновения разнообразных комбинаций и
склеек, фактически в реальной задаче встретилась лишь одна новая
перестройка такого рода \cite{OrRyab}. Полная теоретическая
классификация невырожденных особенностей с возникающими атомами
получена в \cite{BolFomRus}. В большинстве случаев при наличии
бифуркации $\mathcal{F}_{{\mbf{f}}_ -  }  \to
\mathcal{F}_{{\mbf{f}}_0 } \to \mathcal{F}_{{\mbf{f}}_ +  }$ знание
количества связных компонент всех трех указанных поверхностей
позволит установить и топологический тип критической поверхности
$\mathcal{F}_{{\mbf{f}}_0}$. Как отмечено выше, бифуркация
сопровождается возникновением кратного корня у максимального
многочлена, то есть, по крайней мере, у одного из многочленов $V_i$.
Как показывает опыт исследования алгебраически разрешимых систем,
вариантов может быть очень много. Однако в рамках рассматриваемой
здесь задачи принципиально важны только два: один из отрезков
(\ref{eq2_16}) стягивается в точку или кратный корень возникает
внутри (в том числе, возможно, что и в граничной точке) этого
отрезка. В первом случае соответствующий радикал тождественно равен
нулю и <<раздвоения>> в прообразе вызывать не может. Его следует,
поэтому, отнести к группе радикалов, меняющих знак. Во втором случае
заметим, что точка, отвечающая кратному корню, с точки зрения
исследования количества связных компонент, <<проходится>>
траекторией в бесконечные моменты времени $t =  \pm \infty$, и при
этом возникшие на месте пары сопряженных радикалов вещественные
выражения следует также считать меняющими знак вдоль условной
траектории в плоскости $(s_i ,v_i)$, геометрически совпадающей с
соответствующей <<восьмеркой>>. Интересные явления возникают, когда
кратный корень лежит за пределами достижимой области. Он может
вообще не повлиять на интегральную поверхность, накрывающую
рассматриваемый прямоугольник, и тогда лиувиллев тор непрерывно
меняется без бифуркаций, но может оказать и некое <<внешнее>>
воздействие, которое повлечет за собой, например, стягивание одной
из образующих в точку. Лишь в таких случаях может потребоваться
дополнительный анализ формул.

\section{Редукция булевых вектор-функций}\label{sec3}
Приведем ряд утверждений, позволяющих сократить вычисления при
оценке количества классов эквивалентности, а в некоторых случаях
сделать результат совершенно наглядным.

Напомним, что аргумент $u_j$ булевой функции называется фиктивным,
если ее значение не меняется при инверсии этого аргумента (замене
$u_j$ на $\neg u_j$). Компоненту булевой вектор-функции назовем
фиктивной, если она равна константе.

\begin{lemma}\label{lem1}
Количество классов эквивалентности относительно булевой
вектор-функции не изменяется при отбрасывании фиктивных аргументов
любой группы или фиктивных компонент.
\end{lemma}
Доказательство очевидно. Заметим, что исключение фиктивного
аргумента второй группы и компоненты-константы не изменяет и самих
классов эквивалентности. Далее предполагаем, что в исходных функциях
фиктивные аргументы и компоненты исключены.

\begin{defin}\label{thdef2}
Пусть $u_1,...,u_N$ булевы переменные. Булевым мономом от этих
переменных назовем выражение вида
\begin{equation}\notag
u_{j_1}\oplus ... \oplus u_{j_M}, \qquad M \leqslant N.
\end{equation}
\end{defin}
Поскольку $u \oplus u \equiv 0$, $u \oplus 0 \equiv u$, то любую
переменную имеет смысл включать в моном лишь в первой степени.

\begin{defin}\label{thdef3}
Алгебраической булевой вектор-функцией $($или, сокращенно, АБВФ$)$
назовем такую булеву вектор-функцию, все компоненты которой являются
булевыми мономами от аргументов.
\end{defin}

Очевидно, алгебраичность БВФ означает, что она является линейным
отображением векторных пространств над полем $\mB$. Далее в разделе
предполагается, что задана АБВФ вида
\begin{equation}\label{eq3_1}
{C}: \mB^m \times \mB^n \to \mB^k,
\end{equation}
где $\mB^m=\{\mbf{v}\}$ -- пространство аргументов первой группы,
$\mB^n=\{\mbf{w}\}$ -- пространство аргументов второй группы.

\begin{lemma}\label{lem2}
Пусть функция
$$
{B}: \mB^m \times \mB^n \to \mB^{k_1},  \qquad {k_1}<k
$$
получена из функции ${C}$ отбрасыванием нескольких компонент
$($проекцией на пространство меньшей размерности$).$ Тогда из
$\mbf{v}'\msim{C}\mbf{v}''$ следует, что
$\mbf{v}'\msim{B}\mbf{v}''$.
\end{lemma}

\begin{lemma}\label{lem3}
Пусть функция ${C}$ имеет в составе своих компонент некоторое
выражение $z_j$, которое можно представить алгебраическим мономом от
других компонент этой функции. Множество классов эквивалентности
относительно ${C}$ в пространстве любой группы аргументов не
изменится, если отбросить компоненту $z_j$.
\end{lemma}

\begin{lemma}\label{lem8} Пусть
заданы $\mB$-линейные изоморфизмы ${X}: \mB^m  \to \mB^m$,
${Y}:\mB^n \to \mB^n$, ${Z}:\mB^k  \to \mB^k$. Рассмотрим функцию
${B}:\mB^m \times \mB^n  \to \mB^k$, определенную как ${B} = {Z}
\circ {C} \circ ({X},{Y})$. Тогда количество классов эквивалентности
в пространстве $\mB^m$ относительно функций ${C}$ и ${B}$ совпадает.
\end{lemma}

Доказательство этих трех лемм вытекает прямо из определения
эквивалентности.

\begin{lemma}\label{lemxn}
Пусть у функции ${C}$ существует набор компонент
$\widetilde{\mbf{z}}\in \mB^{k_1}$, зависящих только от аргументов
$\widetilde{\mbf{w}}\in \mB^{n_1}$ второй группы, причем эти
аргументы, в свою очередь, входят только в эти компоненты. Тогда
классы эквивалентности в пространстве $\mB^m$ относительно функции
${C}$ не изменятся, если отбросить указанные компоненты и аргументы.
\end{lemma}
\begin{proof} При необходимости перенумеруем компоненты и аргументы
так, чтобы компоненты $\widetilde{\mbf{z}}$ и аргументы
$\widetilde{\mbf{w}}$ оказались последними в записи:
${\mbf{z}}=(\widehat{\mbf{z}},\widetilde{\mbf{z}})$, $
{\mbf{w}}=(\widehat{\mbf{w}},\widetilde{\mbf{w}})$. Функция $C$
примет вид
$$
(\widehat{\mbf{z}},\widetilde{\mbf{z}})=C(\mbf{v},\widehat{\mbf{w}},\widetilde{\mbf{w}})
= (B(\mbf{v},\widehat{\mbf{w}}), D(\widetilde{\mbf{w}}))
$$
где $B:\mB^m{\times}\mB^{n-n_1} \to \mB^{k-k_1}$, $D:\mB^{n_1}\to
\mB^{k_1}$. По лемме~\ref{lem2} из $\mbf{v}'\msim{C}\mbf{v}''$
следует, что $\mbf{v}'\msim{B}\mbf{v}''$. Пусть
$\mbf{v}'\msim{B}\mbf{v}''$. Положив $\widetilde{\mbf{w}}=0$,
получим $\widetilde{\mbf{z}}=0$, откуда $\mbf{v}'\msim{C}\mbf{v}''$.
\end{proof}

\begin{lemma}\label{lem4}
Пусть у функции ${C}$ существует аргумент второй группы $w_i$,
входящий ровно в одну компоненту $z_j$. Тогда классы эквивалентности
в пространстве $\mB^m$ относительно функции ${C}$ не изменятся, если
отбросить компоненту $z_j$ и аргумент $w_i$.
\end{lemma}
\begin{proof} Выделим два предположения: во-первых, $w_i$ не
является фиктивным аргументом для~$z_j$; во-вторых, для всех
остальных компонент функции ${C}$ аргумент $w_i$ является фиктивным.
Пусть для определенности $i=n, j=k$, то есть выделенный аргумент и
выделенная компонента являются последними в записи (это можно
сделать перестановками аргументов и компонент). Обозначим
$\widehat{\mbf{w}}=(w_1,...,w_{n-1})$. По условиям леммы можно
записать
\begin{equation}\notag
{C}(\mbf{v},\widehat{\mbf{w}},w_n)=(B(\mbf{v},\widehat{\mbf{w}}),\xi(\mbf{v},
\widehat{\mbf{w}})\oplus w_n),
\end{equation}
где $\xi(\mbf{v}, \widehat{\mbf{w}})$ -- некоторый булев моном. По
лемме \ref{lem2} достаточно доказать, что из
$\mbf{v}'\msim{{B}}\mbf{v}''$ следует $\mbf{v}'\msim{C}\mbf{v}''$.
Пусть существуют $\widehat{\mbf{w}}'$, $\widehat{\mbf{w}}''$ такие,
что
${B}(\mbf{v}',\widehat{\mbf{w}}')={B}(\mbf{v}'',\widehat{\mbf{w}}'')$.
Если
$\xi(\mbf{v}',\widehat{\mbf{w}}')=\xi(\mbf{v}'',\widehat{\mbf{w}}'')$,
то, очевидно, $\mbf{v}'\msim{C}\mbf{v}''$. В противном случае
положим $\mbf{w}'=(\widehat{\mbf{w}}',0)$,
$\mbf{w}''=(\widehat{\mbf{w}}'',1)$. Получим
${C}(\mbf{v}',\mbf{w}')={C}(\mbf{v}'',\mbf{w}'')$. Итак, последнюю
компоненту ${C}$ можно отбросить без изменения классов
эквивалентности. Но тогда аргумент $w_n$ станет фиктивным и его
также можно исключить. Лемма доказана.
\end{proof}

Пусть известно, что некоторый аргумент заведомо принадлежит ко
второй группе, но входит в несколько компонент функции. Можно ли
такую функцию всегда привести к виду, допускающему применение
леммы~\ref{lem4}? Ответ дают следующие две леммы.

\begin{lemma}\label{lem5}
Если к некоторой компоненте функции ${C}$ прибавить $($по
модулю~$2)$ некоторую другую компоненту, то классы эквивалентности
относительно этой функции не изменятся.
\end{lemma}
Для доказательства достаточно в условиях леммы~\ref{lem8} положить
\begin{equation}\label{eq3_2}
\begin{array}{c}
X = {\mathop{\rm Id}\nolimits}_{\mB^m }, \quad Y = {\mathop{\rm Id}\nolimits}_{\mB^n },\\
Z(z_1 , \ldots ,z_i , \ldots ,z_j , \ldots ,z_k ) = (z_1 , \ldots
,z_i  \oplus z_j , \ldots ,z_j , \ldots ,z_k ).
\end{array}
\end{equation}

\begin{remark}\label{remm1}
Перестановки компонент также являются $\mB$-изоморфизмами $\mB^k$.
Аналогично, перестановки аргументов одной группы и подстановки вида
\begin{equation}\label{eq3_3}
u_i  \to u_i  \oplus u_j
\end{equation}
в пределах одной группы служат $\mB$-изоморфизмами пространств
$\mB^m$, $\mB^n$. Поэтому они не меняют количества классов
эквивалентности.
\end{remark}

\begin{lemma}\label{lemcor5}
Пусть у функции ${C}$ существует компонента $z_j=w_i$, где $w_i$ --
аргумент второй группы. Тогда количество классов эквивалентности в
пространстве $\mB^m$ относительно функции ${C}$ не изменится, если
исключить аргумент $w_i$ из всех компонент и затем отбросить
компоненту $z_j$ и аргумент $w_i$.
\end{lemma}

\begin{proof}
Прибавим компоненту $z_j$ ко всем другим компонентам, содержащим
аргумент $w_i$. По лемме~\ref{lem5} классы эквивалентности не
изменятся. При этом аргумент $w_i$ будет исключен из всех компонент,
кроме $z_j$. Но тогда по лемме~\ref{lem4} можно отбросить компоненту
$z_j$ и аргумент $w_i$.
\end{proof}

\begin{remark}\label{rem2}
Если в результате действий, осуществляемых в соответствии с
приведенными утверждениями, окажется, что в функции не осталось
аргументов первой группы или все компоненты подлежат исключению, то
класс эквивалентности, очевидно, один.
\end{remark}

\begin{lemma}\label{lem7}
Пусть функция ${C}$ представима в виде ${C}={B}\circ ({X},{Y})$, где
$$
{{X}: \mB^m  \to \mB^{m_1}}, \qquad {{Y}:\mB^m\times \mB^n \to
\mB^{n_1}}, \qquad {{B}: \mB^{m_1}\times \mB^{n_1} \to \mB^k}
$$
-- алгебраические булевы вектор-функции. Предположим, что

$(i)$ отображение ${X}$ сюръективно;

$(ii)$ для любого $\mbf{v}\in \mB^m$ сюръективно индуцированное
отображение $ {Y}(\mbf{v}, {\boldsymbol \cdot}) : \mB^n \to
\mB^{n_1}$.

Тогда количество классов эквивалентности в $\mB^m$ относительно
${C}$ равно количеству классов эквивалентности в $\mB^{m_1}$
относительно ${B}$.
\end{lemma}

\begin{proof} Имеем
$$
C(\mbf{v},\mbf{w})=B(X(\mbf{v}),Y(\mbf{v},\mbf{w})).
$$
Пусть $\mbf{v}' \msim{{C}} \mbf{v}''$. Тогда, очевидно,
${X}(\mbf{v}')\msim{{B}}{X}(\mbf{v}'')$. Поэтому корректно
определено отображение классов $\mC_{{C}}(\mbf{v}) \mapsto
\mC_{{B}}({X}(\mbf{v}))$. По условию $(i)$ это отображение
сюръективно. Покажем, что оно и взаимно-однозначно. Пусть элементы
${\mbf{y}',\mbf{y}'' \in \mB^{n_1}}$ эквивалентны относительно
${B}$. Тогда для некоторых ${\mbf{z}', \mbf{z}''\in \mB^{m_1}}$
будем иметь ${{B}(\mbf{y}',\mbf{z}')={B}(\mbf{y}'',\mbf{z}'')}$. По
предположению $(i)$ существуют $\mbf{v}',\mbf{v}'' \in \mB^m$, такие
что ${X}(\mbf{v}')=\mbf{y}'$ и ${X}(\mbf{v}'')=\mbf{y}''$. Тогда по
предположению $(ii)$ найдутся $\mbf{w}',\mbf{w}'' \in \mB^n$, такие
что ${Y}(\mbf{v}',\mbf{w}')=\mbf{z}'$ и
${Y}(\mbf{v}'',\mbf{w}'')=\mbf{z}''$. Но это означает, что
${C}(\mbf{v}',\mbf{w}')={C}(\mbf{v}'',\mbf{w}'')$, то есть $\mbf{v}'
\msim{{C}} \mbf{v}''$. Таким образом, различным классам
эквивалентности относительно ${C}$ не может соответствовать один
класс эквивалентности относительно ${B}$, ч.т.д.
\end{proof}

\begin{remark}\label{rem3}
Эта лемма позволяет, при определенных условиях, заменять в
отображениях надстройки набор базисных радикалов на составленные из
них мономы и в качестве аргументов АБВФ выбирать булевы знаки
последних.
\end{remark}

Любой булев моном от переменных $u_1 , \ldots ,u_N$ можно
представить в виде
$$
\mathop  \oplus \limits_{i = 1}^N b_i u_i, \qquad b_i  \in \mB
$$
(сумма по модулю $2$, произведение -- стандартное в $\mB$). Поэтому
любая алгебраическая булева вектор-функция $B: \mB^N  \to \mB^k $
записывается с помощью двоичной матрицы
\begin{equation}\notag
|| {b_{ij} } ||: \qquad z_{i}  = \mathop  \oplus \limits_{j = 1}^N
b_{ij} u_j ,\quad i = 1 \div k.
\end{equation}
Не будем различать обозначения АБВФ и ее матрицы.

\begin{defin}\label{thdef5}
Для функции вида $(\ref{eq3_1})$ следующие преобразования  ее
матрицы назовем элементарными:

$(i)$ перестановка строк;

$(ii)$ перестановка столбцов в пределах одной группы;

$(iii)$ прибавление $($по модулю $2)$ к некоторой строке другой
строки;

$(iv)$ прибавление $($по модулю $2)$ к некоторому столбцу другого
столбца той же группы.
\end{defin}

\begin{defin}\label{thdef6}
Пусть дана функция вида $(\ref{eq3_1})$. Назовем эквивалентными
преобразованиями ее матрицы преобразования, полученные умножением на
матрицы слева или справа над полем~$\mB$, которые не меняют
количества классов эквивалентности первой группы аргументов.
\end{defin}

\begin{lemma}\label{lem9}
Все элементарные преобразования $(i)$--$(iv)$ являются
эквивалентными. В дополнение к ним эквивалентными являются следующие
преобразования:

$(v)$ отбрасывание нулевого столбца;

$(vi)$ отбрасывание нулевой строки;

$(vii)$ отбрасывание набора строк и набора столбцов второй группы, в
которых отличные от нуля элементы лежат только в их пересечении;

$(viii)$ отбрасывание строки с номером $i$ и столбца второй группы с
номером $j$, если этот столбец единичный с единицей в строке $i$.
\end{lemma}
\begin{proof} В части элементарных преобразований утверждение следует из
леммы~\ref{lem8} при введении $\mB$-изоморфизмов, порожденных
перестановками и преобразованиями вида (\ref{eq3_2}), (\ref{eq3_3}).
Преобразования $(v)$, $(vi)$ эквивалентны в силу леммы~\ref{lem1},
преобразование $(vii)$ -- в силу леммы~\ref{lemxn}, а преобразование
$(viii)$ -- в силу леммы~\ref{lem4}.
\end{proof}

\begin{lemma}\label{lem11} Пусть $A$ -- двоичная $k{\times}N$-матрица ранга $P$ и все ее
столбцы отнесены к одной группе. Тогда элементарными
преобразованиями она приводится к виду
\begin{equation}\label{eq3_4}
A = \begin{array}{||c|c||}
   E_P & 0  \\
   \hhline{||-|-||}
   0 & 0  \\
 \end{array}\,.
\end{equation}
\end{lemma}
\begin{proof}
Перестановками строк и столбцов добьемся того, чтобы невырожденная
матрица размерности $P{\times}P$ была левым верхним углом $A$.
Приведем ее к единичной методом Гаусса над полем $\mB$. Пусть
${\mbf{a}}_i $ -- строка с номером $i$, ${\mbf{b}}_i$ -- столбец с
номером $i$ ($i \leqslant P$). Если в строке с номером $j > P$ на
месте $i$ стоит 1, то прибавим к ней строку ${\mbf{a}}_i $. Если в
столбце с номером $j
> P$ на месте $i$ стоит 1, то прибавим к нему столбец ${\mbf{b}}_i$.
Получим нулевые блоки в (\ref{eq3_4}) на побочной диагонали.
Оставшийся юго-восточный блок окажется нулевым по предположению о
ранге.
\end{proof}
\begin{defin}\label{thdef7}
Матрицу вида $(\ref{eq3_4})$ назовем блочно-единичной порядка~$P$.
\end{defin}

Далее через $0_{i,j}$ обозначаем нулевую $i{\times}j$-матрицу.

\begin{theorem}\label{th3}
Матрица алгебраической булевой вектор-функции ${C:\mB^m  \times
\mB^n \to \mB^k}$ элементарными преобразованиями приводится к
каноническому виду
\begin{equation}\notag
C=
\begin{tabular}{||m{1.2cm}|m{1.2cm}|c|c|m{2.1cm}||}
\multirow{2}{1.2cm}{ \centering $E_P$}&
\multirow{4}{1.2cm}{\centering
$0_{k,m-P}$} & \multicolumn{3}{c||}{ $0_{P-Q,n}$ } \\
\hhline{||~|~|-|-|-||} & & \centering $E_Q$ & \centering
$0_{Q,R}$ & \multirow{2}{2.1cm}{\centering $0_{Q+R,n-Q-R}$} \\
\hhline{||-|~|-|-|~||} \multirow{2}{1.2cm}{\centering $0_{k-P,P}$} &
{} & \centering
$0_{R,Q}$ & \centering $E_R$ & \\
\hhline{||~|~|-|-|-||} & & \multicolumn{3}{c||}{$0_{k-P-R,n}$}
\end{tabular}\,.
\end{equation}
Количество классов эквивалентности в $\mB^m$ относительно $C$ равно
$2^{P - Q}$.
\end{theorem}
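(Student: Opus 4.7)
The plan is to apply Lemma~\ref{lem11} in stages to sub-blocks of the $k{\times}(m+n)$ matrix of~$C$, writing it as $[A_v\,|\,A_w]$ where $A_v$ collects the outer columns and $A_w$ the inner ones. The asymmetry to exploit is that, among the admissible operations of Definition~\ref{thdef5}, row moves act on whole rows while column moves are confined to one group; this is the setting in which all reductions below are performed.

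First I would reduce $A_w$ by Lemma~\ref{lem11}, using row permutations and additions together with column permutations and additions inside the inner group; this puts $A_w$ in block form with $E_{R'}$ in the top-left corner, where $R'=\rk A_w$, and zeros elsewhere, while $A_v$ is simultaneously rearranged into a top strip $B_1$ of $R'$ rows and a bottom strip $B_2$ of $k-R'$ rows. Next I would reduce $B_2$ to analogous block form, $E_{P'}$ in the top-left with $P'=\rk B_2$, by Lemma~\ref{lem11} applied with row operations confined to the bottom strip (safe, since the inner part is zero there) and column operations within the outer group; the latter also acts on $B_1$, while $A_w$ is left untouched. Then the pivot rows of $E_{P'}$, whose inner part is zero, would be used to clear the first $P'$ columns of $B_1$ by successive row additions without disturbing either $A_w$ or~$E_{P'}$.

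The main technical step would be to reduce the remaining columns $B_1^{(2)}$ of $B_1$ (in outer positions $P'{+}1,\dots,m$) to block form with $E_{Q'}$ in the top-left, $Q'=\rk B_1^{(2)}$, by Lemma~\ref{lem11} using row operations inside the top strip and column operations in outer columns $P'{+}1,\dots,m$. The hard part here is that the row operations on the top strip turn the $E_{R'}$ block of $A_w$ into some invertible $R'{\times}R'$ matrix~$M$; to restore $A_w$ I would then perform the sequence of inner-group column operations realizing right multiplication by~$M^{-1}$, possible because elementary column moves generate the full group of invertible matrices over $\mB$. Being localized to the inner group, these moves leave $A_v$ (and hence $E_{P'}$ and the already cleared part of $B_1$) entirely intact; conversely, top-strip row operations do not touch the bottom strip, so $E_{P'}$ is preserved during this stage.

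A final row permutation arranges the $P'$ pivots of $E_{P'}$, the $Q'$ mixed pivots and the $R'{-}Q'$ purely inner pivots in the order dictated by the theorem, yielding the canonical form with $P=P'{+}Q'$, $Q=Q'$ and $R=R'{-}Q'$. The count $2^{P-Q}$ is then read off directly: the $P-Q$ rows $z_i=v_i$ force $\mbf{v}'$ and $\mbf{v}''$ to agree on the first $P-Q$ coordinates; the $Q$ mixed rows $z_i=v_i\oplus w_{i-(P-Q)}$ can absorb any discrepancy on coordinates $P-Q+1,\dots,P$ by a suitable choice of~$\mbf{w}$; and coordinates $v_{P+1},\dots,v_m$ do not appear in $C$ at all. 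Hence the $C$-equivalence classes in $\mB^m$ are parametrized by the first $P-Q$ entries of~$\mbf{v}$.
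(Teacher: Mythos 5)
Your argument is correct and rests on the same machinery as the paper's proof: split the matrix of $C$ into the two column groups, apply Lemma~\ref{lem11} three times to sub-blocks while tracking which admissible operations of Definition~\ref{thdef5} disturb which previously established identity blocks, and read the class count off the canonical form. The only structural difference is the order of reduction: the paper normalizes the outer block first (getting $E_P$ directly), then the inner block on the rows where the outer part vanishes (getting $E_R$), then the residual inner part in the top rows (getting $E_Q$); you normalize the inner block first, so your intermediate ranks come out as $R'=\rk B=Q+R$, $P'=P-Q$, $Q'=Q$, and the theorem's parameters appear only after the translation $P=P'+Q'$, $R=R'-Q'$, which you state and which is indeed forced once your final matrix is matched, up to a row permutation, against the canonical block form. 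Your treatment of the delicate last stage is in fact more explicit than the paper's: the paper says only that row \emph{permutations} among the first $P$ rows are repaired by column permutations, whereas in general the final application of Lemma~\ref{lem11} uses row additions as well, so one must restore the disturbed identity block by arbitrary within-group column operations realizing $M^{-1}$ --- exactly the step you spell out for $E_{R'}$. Finally, your direct count (agreement forced on the first $P-Q$ coordinates of $\mbf{v}$, discrepancies on the next $Q$ absorbed by a choice of $\mbf{w}$, the remaining coordinates fictitious) is an equivalent, slightly more self-contained version of the paper's reduction to the single block $E_{P-Q}$ via Lemma~\ref{lem9}; both yield $2^{P-Q}$ classes.
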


\begin{proof} Матрицу $C$ представим в виде $C=||A|B||$,
где $A$ и $B$ -- матрицы индуцированных отображений $A:\mB^m  \to
B^k$, $B:\mB^n  \to \mB^k$, так что $ C({\mbf{v}},{\mbf{w}}) =
A{\mbf{v}} \oplus B{\mbf{w}}$ (сложение векторов над полем $\mB$).
Пусть $\rk A = P \leqslant \min (m,k)$. По лемме~\ref{lem11}
приводим $A$ к блочно-единичной матрице порядка~$P$:
$$
C = \begin{array}{||c|m{1.2cm}|c||}
   E_P & \multirow{2}{1.2cm}{\centering $0_{k,m-P}$}& B' \\
   \hhline{||-|~|-||}
   0_{k-P,P} & {}  & B''
 \end{array}\,.
$$
Пусть $\rk B'' = R$. Не затрагивая первые $P$ строк и первые $m$
столбцов, по той же лемме приводим $B''$ к блочно-единичной порядка
$R$, после чего с помощью строк с номерами ${P + 1, \ldots,}{P + R}$
обнуляем ту часть $B'$, которая заключена в столбцах ${m + 1,
\ldots,}{m + R}$. Получаем
\begin{equation}\notag
C=\begin{array}{||c|c|c|c||}
\begin{tabular}{c} $E_P$ \end{tabular}  &
\multirow{3}{1.3cm} {\centering $0_{k,m-P}$ } &  0_{P,R} &    B'''   \\
\hhline{||-|~|-|-||} \multirow{2}{1.3cm}{\centering $0_{k-P,P}$ } &
{} &  E_R &
\multirow{2}{2cm}{\centering  $0_{k-P,n-R}$ }   \\
\hhline{||~|~|-|~||}
& &  0_{k-P-R,R} &  {}  \\
\end{array}\, .
\end{equation}
Пусть $\rk B''' = Q$. Не затрагивая столбцов ${P + 1, \ldots ,}{m +
R}$ и строк ${P + 1, \ldots , k}$, приводим матрицу $B'''$ к
блочно-единичной порядка $Q$. Если при этом понадобилась
перестановка строк внутри номеров ${1, \ldots ,P}$, то такой же
перестановкой в столбцах ${1, \ldots ,P}$ восстановим
блочно-единичную структуру $A$:
\begin{equation}\notag
C=
\begin{array}{||m{1.2cm}|m{1.2cm}|m{1.6cm}|c|m{1.6cm}||}
\multirow{2}{1.2cm} {\centering $E_P$} &
\multirow{4}{1.2cm}{\centering $0_{k,m-P}$} &
\multirow{2}{1.6cm}{\centering $0_{P,R}$} &
E_Q   &   \multirow{4}{1.6cm}{\centering $0_{k,n-R-Q}$}  \\
\hhline{||~|~|~|-|~||}
& & &   0_{P-Q,Q}  &   {}  \\
\hhline{||-|~|-|-|~||} \multirow{2}{1.2cm}{\centering $0_{k-P,P}$} &
{} & \centering $E_R$
&   \multirow{2}{1.2cm}{ \centering $0_{k-P,Q}$}  &     \\
\hhline{||~|~|-|~|~||}
& & \centering  $0_{k-P-R,R}$ &   {}   &   \\
\end{array}\, .
\end{equation}
Отсюда, пользуясь лишь перестановкой строк и столбцов (внутри
группы), получаем для~$C$ канонический вид из утверждения теоремы.
Первое утверждение доказано.

По лемме~\ref{lem9}, не изменяя количества классов эквивалентности,
можем отбросить все столбцы и строки с номерами, большими, чем
$P-Q$. Получаем матрицу $E_{P - Q}$ тождественного отображения,
причем все аргументы относятся к первой группе. Поэтому каждый класс
эквивалентности содержит ровно один элемент, и таких классов $2^{P -
Q}$. Теорема доказана.
\end{proof}

Отметим, что для $\mB$-линейного отображения $C=A{\times}B$ числа
$P,Q,R$ представляют собой инварианты: $P=\rk A$, $Q+R=\rk B$,
$P+R=\rk C$, $Q=\dim (\mathop {\rm Im}\nolimits A \cap \mathop {\rm
Im}\nolimits B)$. В частности, все доказанные выше утверждения можно
суммировать в следующей теореме, вытекающей из линейности
алгебраической булевой вектор-функции и определения классов
эквивалентности. При этом канонический вид матрицы получается
подходящим выбором базисов в прообразе и образе.

\begin{theorem}
Пусть $C: U \to Z$ линейное отображение $($над произвольным полем$)$
и пусть $U=V \dot+ W$ $($прямая сумма$).$ Положим $v_1 \sim v_2$
$(v_1,v_2 \in V)$, если для некоторых $w_1,w_2 \in W$
$C(v_1+w_1)=C(v_2+w_2)$. Тогда:

$(1)$ класс эквивалентности нуля $V_0$ есть подпространство в $V$;

$(2)$ для любого $v\in V$ его класс эквивалентности есть $v + V_0 $,
в частности, множество классов эквивалентности изоморфно
фактор-пространству $V/V_0 $;

$(3)$ размерность $d = \dim V/V_0$ равна размерности
фактор-пространства $C(V)/Z_0 $, где $Z_0  = C(V) \cap C(W)$;

$(4)$ если поле конечно и состоит из $g$ элементов, то количество
классов эквивалентности равно $g^d$.
\end{theorem}

Интересно проанализировать сходство и различие с постановкой задачи
и результатами работ \cite{Comes1}, в которых также фактически
используется вычисление размерности некоторых фактор-пространств над
полем $\mB=\mathbb{Z}_2$.

\section{Приложение к классическим задачам}\label{sec4}
\subsection{Случай Чаплыгина -- Сретенского} Уравнения движения
гиростата в однородном поле имеют вид
\begin{equation}\label{eq4_1}
\begin{array}{l}
\displaystyle{\mbf{I}\frac{d{\boldsymbol
\omega}}{dt}=(\mbf{I}{\boldsymbol \omega}+{\boldsymbol
\lambda})\times {\boldsymbol \omega}+\mbf{r} \times {\boldsymbol
\alpha},}\qquad \displaystyle{\frac{d{\boldsymbol \alpha}}{dt}=
{\boldsymbol \alpha}\times {\boldsymbol \omega}.}
\end{array}
\end{equation}
В случае Чаплыгина~--~Сретенского предполагается, что после введения
безразмерных переменных получено ${\mbf{I}=\mathop{\rm
diag}\nolimits \{4,4,1\}}$, ${\bf r} = (1,0,0)$, ${\boldsymbol
\lambda}=(0,0,\lambda)$ и
\begin{equation}\label{eq4_2}
| {\boldsymbol \alpha}| = 1.
\end{equation}
Последнее соотношение (геометрический интеграл) рассматриваем как
определение фазового пространства $\bR^3 \times S^2 \subset {\bR}^6$
системы (\ref{eq4_1}). Общие интегралы таковы
\begin{equation}\notag
\begin{array}{l}
H = 2(\omega _1^2  + \omega _2^2 ) + \ds{\frac{1} {2}}\omega _3^2 -
\alpha _1, \qquad  G = 4(\omega _1 \alpha _1  + \omega _2 \alpha _2
) + (\omega _3 + \lambda )\alpha _3.
\end{array}
\end{equation}
На любом уровне интеграла $G$ система (\ref{eq4_1}) гамильтонова с
двумя степенями свободы с гамильтонианом $H$. На нулевом уровне $G =
0$ существует дополнительный интеграл, обобщающий интеграл
С.А.\,Чаплыгина и найденный Л.Н.\,Сре\-тен\-ским в работе
\cite{Sret}:
\begin{equation}\notag
K = 2(\omega _3  - \lambda )(\omega _1^2  + \omega _2^2 ) + 2\omega
_1 \alpha _3
\end{equation}
(множитель 2 введен для удобства). В частности, связные компоненты
регулярных интегральных многообразий индуцированной системы являются
двумерными торами~${\mbf{T}}^2$.

Полный топологический анализ этой системы, включающий описание всех
бифуркаций и их последовательностей вдоль непрерывных путей на
плоскости констант интегралов, выполнен в \cite{Kh831,Kh84,KhBk}.
Изображение бифуркаций в виде графов Фоменко вдоль прямых постоянной
энергии дано А.А.\,Ошемковым \cite{Oshem2,Oshem1}, вычисление
числовых характеристик, определяющих классы траекторной
эквивалентности выполнено в работе \cite{Orel}.

Незначительно модифицируя разделение переменных, указанное
Л.Н.\,Сре\-тен\-ским, введем переменные $s_1 ,s_2$, полагая
\begin{equation}\notag
\omega _1^2  + \omega _2^2  =  - \ds{\frac{s_1 s_2 }{4}},\quad
\omega _3 = s_1  + s_2  - \lambda \quad \quad (s_1  \geqslant s_2 ).
\end{equation}
Получим \cite{KhBk}
\begin{equation}\notag
2(s_2  - s_1 ){\ds{\frac{ds_1 }{dt}}} = \sqrt {V(s_1 )} ,\quad 2(s_2
- s_1 )\ds{\frac{ds_2 }{dt}} = \sqrt {V(s_2)} ,
\end{equation}
где
\begin{eqnarray}
& V(w) =  - W(w)W_* (w), \label{eq4_4} \\
& W(w) = w(w - \lambda )^2  - 2(h + 1)w - 2k, \qquad W_* (w) = w(w -
\lambda )^2  - 2(h - 1)w - 2k, \notag
\end{eqnarray}
а $h,k$ -- постоянные интегралов $H,K$ соответственно. Обозначим
\begin{equation}\label{eq4_5}
\begin{array}{l}
R_{11}  = \sqrt { - W(s_1 )} ,\quad R_{12}  = \sqrt {W_* (s_1
)},\quad  R_{21}  = \sqrt {W(s_2 )} ,\quad R_{22}  = \sqrt { - W_*
(s_2 )}.
\end{array}
\end{equation}
Алгебраическое решение задачи имеет вид
\begin{equation}\label{eq4_6}
\begin{array}{lll}
\omega _1  = \ds{\frac{1}{8}}(R_{11} R_{22} + R_{12} R_{21}),&
\omega _2  = \ds{\frac{1}{8}}(R_{12} R_{22}  - R_{11} R_{21}), & \omega _3  = s_1  + s_2  - \lambda,\\[4mm]
\alpha _1= 1 -\ds{\frac{{R_{11}^2  + R_{22}^2 }}{{2(s_1 - s_2 )}}},
& \alpha _2  = - \ds{\frac{R_{11} R_{12}  + R_{21} R_{22}} {2(s_1 -
s_2 )}} ,& \alpha _3  = \ds{\frac{R_{11} R_{22}  - R_{12} R_{21} }
{2(s_1 - s_2 )}}.
\end{array}
\end{equation}
Согласно лемме~\ref{lem7}, здесь в качестве базисных радикалов можно
выбрать сами радикалы (\ref{eq4_5}) без дальнейшего разложения их на
множители, а в качестве максимального многочлена выступает
многочлен~$V$, заданный в (\ref{eq4_4}). Он имеет кратный корень в
случаях
\begin{equation}\label{eq4_7}
\begin{array}{l}
k = 0,  \\
\Delta \phantom{ _*} = 27\left[k + \ds{\frac{2} {3}} \bigl(h +
1\bigr)\lambda - \ds{\frac{\lambda ^3 }{27}} \right]^2  - 8\bigl(h +
1 + \ds{\frac{\lambda ^2 }{6}}
\bigr)^3  = 0,  \\
\Delta _*  = 27\left[k + \ds{\frac{2} {3}}\bigl(h - 1\bigr)\lambda -
\ds{\frac{\lambda ^3 }{27}}\right]^2  - 8\bigl(h - 1 +
\ds{\frac{\lambda ^2 }{6}} \bigr)^3  = 0,
\end{array}
\end{equation}
что и дает уравнения разделяющего множества (содержащего в себе и
бифуркационную диаграмму \cite{KhBk}).

Полагая ${\mbf{u}} = \lsgn(R_{11}^2,R_{12}^2,R_{21}^2,R_{22}^2)$
(функцию $\lsgn$ к вектору применяем покомпонентно) и
${A}({\mbf{u}}) = \lsgn(R_{11}^2 R_{21}^2, R_{11}^2 R_{22}^2 ,
R_{12}^2 R_{21}^2 , R_{12}^2 R_{22}^2)$, получим булеву
вектор-функцию ${\mbf{z}} = {A}({\mbf{u}})$ для описания допустимых
областей в виде
\begin{equation}\label{eq4_8}
\begin{array}{l}
{A}:{\mbf{u}}\mapsto (u_1\oplus u_3,u_1\oplus u_4,u_2\oplus
u_3,u_2\oplus u_4).
\end{array}
\end{equation}
Условие вещественности выражений (\ref{eq4_6}) имеет теперь вид ${A}
(\mbf{u}) = 0000$, и, очевидно, $ {A}^{ - 1}(0000) = \{
0000,1111\}$. Получаем следующий критерий.
\begin{proposition} При отсутствии кратных корней
максимального многочлена область существования решений на плоскости
$(h,k)$ определяется условием: существует точка $(s_1 ,s_2 )$, для
которой все подкоренные выражения базисных радикалов либо
одновременно положительны, либо одновременно отрицательны. В точке,
отвечающей кратному корню, в этом условии нужно строгие неравенства
заменить на нестрогие.
\end{proposition}

Занумеруем области, на которые разделяющие кривые (\ref{eq4_7})
делят плоскость $(h,k)$. Здесь принята нумерация как в \cite{KhBk}.
Применительно к паре $\{ k = 0,\Delta  = 0\} $ области, отрезки
разделяющего множества и узловые точки показаны на
рис.~\ref{fig_sret1},{\it а} (здесь положение начала координат на
горизонтальной оси несущественно, поэтому ось $Ok$ показана
курсивом). Соответствующие объекты пары $\{k = 0,{\Delta_* = 0}\}$
имеют такое же обозначение, но снабженное звездочкой. Поэтому на
плоскости каждый объект (область, отрезок, точка) получает двойную
нумерацию. В результате имеем регулярные области, показанные на
рис.~\ref{fig_sret1},{\it б} для наиболее богатого случая $2\sqrt 3
< \lambda  < 2$.

Расположение корней многочленов $W,W_*$ приведено в
табл.~\ref{tabsr1}. Корни обозначаются $e_\gamma$ и $e^*_\gamma$ в
случае, когда у соответствующего многочлена три вещественных корня,
$e$ и $e^*$ в случае, когда вещественный корень единственный. Здесь
же указаны и промежутки для $s_1 ,s_2$, удовлетворяющие условию
предложения (см. аналогичный результат в \cite{KhBk}, полученный с
помощью анализа неравенств). Итак, в четырех последних областях
движений нет.

\begin{table}[ht]
\centering
\begin{tabular}{|c| c| c| c|}
\multicolumn{4}{r}{\fts{Таблица \myt\label{tabsr1}}}\\
\hline
\begin{tabular}{c}Номер\\области\end{tabular} &\begin{tabular}{c}Корни\end{tabular}
&\begin{tabular}{c}Область\\изменения $s_1$\end{tabular}&\begin{tabular}{c}Область\\изменения $s_2$\end{tabular} \\
\hline
$\ts{I-I}_*$&$e_1<e_1^*<e_2^*<e_2<0<e_3^*<e_3$&$[\,e_3^*,e_3\,]$&$[\,e_1,e_1^*\,]\cup[\,e_2^*,e_2\,]$\\
\hline
$\ts{I-III}_*$&$e_1<e_2<0<e_1^*<e_2^*<e_3^*<e_3$&$[\,e_1^*,e_2^*\,]\cup[\,e_3^*,e_3\,]$&$[\,e_1,e_2\,]$\\
\hline
$\ts{I-IV}_*$&$e_1<e_2<0<e^*<e_3$&$[\,e^*,e_3\,]$&$[\,e_1,e_2\,]$\\
\hline
$\ts{II-II}_*$&$e_1<e_1^*<0<e_2<e_2^*<e_3^*<e_3$&$[\,e_2,e_2^*\,]\cup[\,e_3^*,e_3\,]$&$[\,e_1,e_1^*\,]$\\
\hline
$\ts{II-V}_*$&$e_1<e^*<0<e_2<e_3$&$[\,e_2,e_3\,]$&$[\,e_1,e^*\,]$\\
\hline
$\ts{IV-III}_*$&$0<e_1^*<e_2^*<e_3^*<e$&$[\,e_1^*,e_2^*\,]\cup[\,e_3^*,e\,]$&$\varnothing$\\
\hline $\ts{IV-IV}_*$&$0<e^*<e$&$[\,e^*,e\,]$&$\varnothing$\\ \hline
$\ts{III-IV}_*$&$0<e^*<e_1<e_2<e_3$&$[\,e^*,e_1\,]\cup[\,e_2,e_3\,]$&$\varnothing$\\
\hline $\ts{V-V}_*$&$e<e^*<0$&$\varnothing$&$[\,e,e^*\,]$\\
\hline
\end{tabular}
\end{table}

\begin{figure}[ht]
\centering
\includegraphics[width=120mm,keepaspectratio]{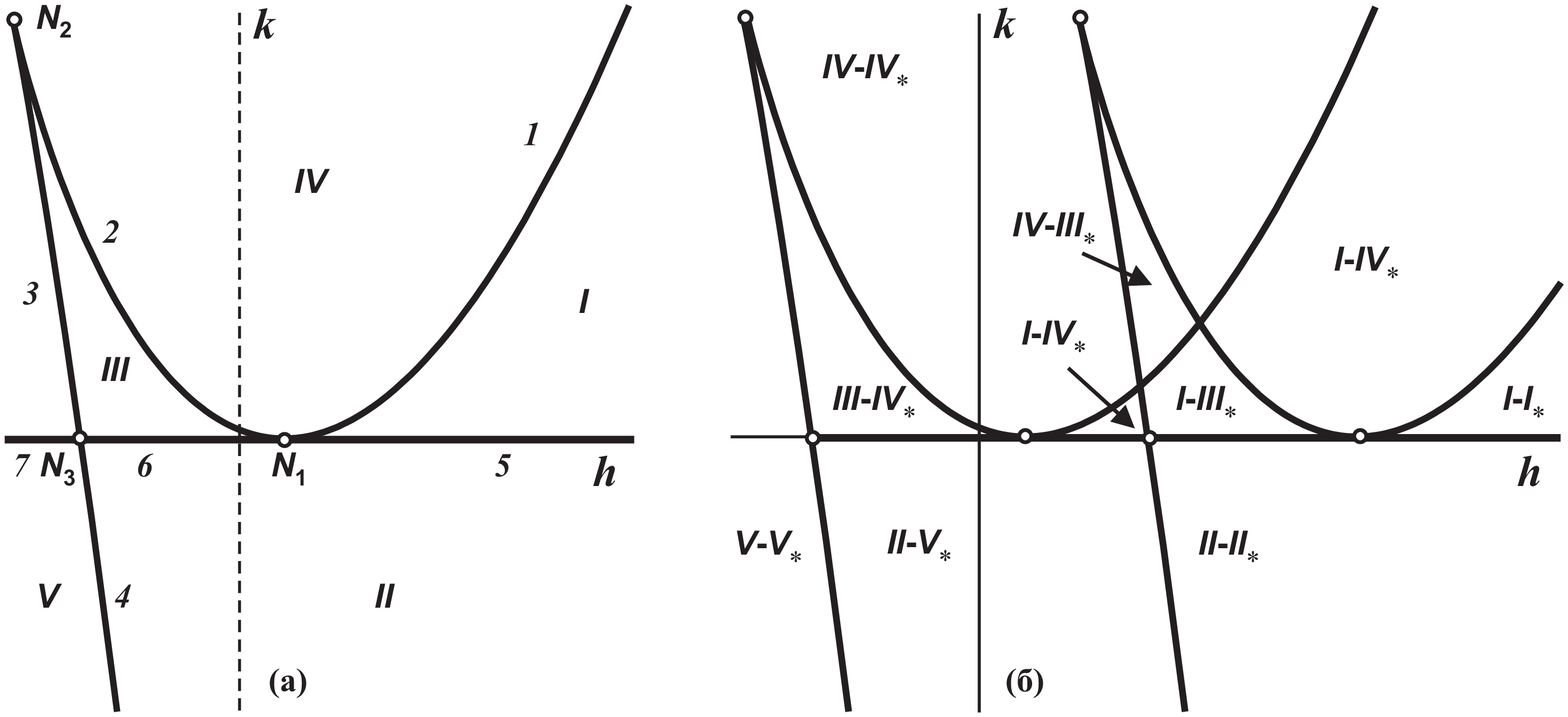}
\caption{Разделяющие кривые и кодировка областей.}\label{fig_sret1}
\end{figure}

Для вычисления количества связных компонент интегральных
многообразий составим булеву функцию $\mbf{z} =
{C}(\mbf{v},\mbf{w})$. В переменных $\mbf{u}$ она совпадает с
функцией (\ref{eq4_8}), но теперь ${\mbf{u}} =
\lsgn(R_{11},R_{12},R_{21},R_{22})$ и ${C}({\mbf{u}}) = \lsgn(R_{11}
R_{21}, R_{11} R_{22} , R_{12} R_{21} , R_{12} R_{22})$, то есть
квадраты радикалов (подкоренные выражения) заменяются на сами
радикалы. В табл.~\ref{tabsr2} представлены, во-первых, разбивка
радикалов (а значит, и аргументов $u_i$) на группы в зависимости от
области (номер области из табл.~\ref{tabsr1}) и варианта изменения
разделенных переменных (при наличии двух прямоугольников вариант
обозначен буквами {\it a,b}) и, во-вторых, результаты вычисления
количества классов эквивалентности (связных компонент интегрального
многообразия, накрывающих выбранный прямоугольник). В данной задаче
это количество несложно вычислить и непосредственно, без
преобразований матрицы, так как двоичная таблица булевой
вектор-функции содержит всего 16 строк.
\begin{table}[ht]
\centering
\begin{tabular}{|c| c| c| c| c|}
\multicolumn{5}{r}{\fts{Таблица \myt\label{tabsr2}}}\\
\hline
\begin{tabular}{c}Номер\\области\end{tabular} &\begin{tabular}{c}Область\\изменения $(s_1,s_2)$\end{tabular}&\begin{tabular}{c}Первая\\группа\end{tabular}
&\begin{tabular}{c}Вторая\\группа\end{tabular}&\begin{tabular}{c}Количество\\классов\end{tabular} \\
\hline
$\ts{I-I}_*,a$&$[\,e_3^*,e_3\,]\times[\,e_1,e_1^*\,]$&$\varnothing$&$R_{11},R_{12},R_{21},R_{22}$&1\\
\hline
$\ts{I-I}_*,b$&$[\,e_3^*,e_3\,]\times[\,e_2^*,e_2\,]$&$\varnothing$&$R_{11},R_{12},R_{21},R_{22}$&1\\
\hline
$\ts{I-III}_*,a$&$[\,e_1^*,e_2^*\,]\times[\,e_1,e_2\,]$&$R_{11},R_{22}$&$R_{12},R_{21}$&2\\
\hline
$\ts{I-III}_*,b$&$[\,e_3^*,e_3\,]\times[\,e_1,e_2\,]$&$R_{22}$&$R_{11},R_{12},R_{21}$&1\\
\hline
$\ts{I-IV}_*$&$[\,e^*,e_3\,]\times[\,e_1,e_2\,]$&$R_{22}$&$R_{11},R_{12},R_{21}$&1\\
\hline
$\ts{II-II}_*,a$&$[\,e_2,e_2^*\,]\times[\,e_1,e_1^*\,]$&$\varnothing$&$R_{11},R_{12},R_{21},R_{22}$&1\\
\hline
$\ts{II-II}_*,b$&$[\,e_3^*,e_3\,]\times[\,e_1,e_1^*\,]$&$\varnothing$&$R_{11},R_{12},R_{21},R_{22}$&1\\
\hline
$\ts{II-V}_*$&$[\,e_2,e_3\,]\times[\,e_1,e^*\,]$&$R_{12}$&$R_{11},R_{21},R_{22}$&1\\
\hline
\end{tabular}
\end{table}

На первый взгляд, результат по количеству компонент представляется
неверным. Действительно, если есть хоть один радикал, не меняющий
знак вдоль любой траектории на данном интегральном многообразии, то,
выбирая различные его знаки в начальной точке, мы должны получить
различные компоненты. Но дело в том, что он не обязательно входит в
выражения фазовых переменных сам по себе. Если он всегда умножается
на радикал из второй группы, то вдоль траектории выражение изменит
знак. Для примера возьмем область $\ts{I-III}_*$, для которой авторы
работы \cite{Ouz}, несмотря на данные им разъяснения, сумели
привести ошибочный результат. Здесь область $\A(h,k)$ состоит из
двух прямоугольников. Отметим сразу же, что $z_4=z_1\oplus z_2\oplus
z_3$. По лемме \ref{lem3} значение $z_4$ можно отбросить. В варианте
$\ts{I-III}_*,a$ фиксированы знаки у $R_{11},R_{22}$. Это означает,
в частности, что никакими путями на торе нельзя изменить значение
величины $z_2=\lsgn(R_{11} R_{22})$. Но если это значение выбрано,
то остальные два значения $z_1, z_3$ могут отвечать четырем наборам
знаков, так как содержат независимые слагаемые по модулю~2 --
$\lsgn(R_{21}), \lsgn(R_{12})$. Другие четыре набора отвечают
инверсному значению переменной $z_2$, а всего функция ${C}$ имеет
восемь различных значений. То есть, в этом варианте~-- компонент
две. Формальный путь здесь таков. Аргумент $u_2=\lsgn(R_{12})$
входит только в переменную $z_3$. Поэтому $z_3$ и $u_2$ можно
отбросить (лемма \ref{lem4}). После этого оказывается, что аргумент
$u_3=\lsgn(R_{21})$ входит только в переменную $z_1$. По той же
лемме $z_1$ и $u_3$ можно отбросить. Остаются два аргумента
$v_1=u_1,v_2=u_4$ и одна компонента функции $z_2=v_1 \oplus v_2$. У
такой функции ровно два значения и нет аргументов второй группы.
Поэтому классов эквивалентности, очевидно, два. В варианте
$\ts{I-III}_*,b$ фиксирован знак только у $R_{22}$. Но зависящим от
этого радикала переменным $z_2,z_4$ можно придать любые значения,
независимо меняя знаки радикалов $R_{11}, R_{12}$, или, что то же
самое, меняя значения аргументов $\lsgn(R_{11}), \lsgn(R_{12})$. То
есть этот прямоугольник накрывается одной компонентой. Рассуждая
формально, имеем следующее. Поскольку переменная $z_4$ уже
отброшена, то, применяя лемму~\ref{lem4}, последовательно
отбрасываем пары $(z_3,u_2)$, $(z_1,u_3)$. Осталась одна переменная
$z_2=u_1\oplus u_4$. Но $u_1$ -- аргумент второй группы. Пару
$(z_2,u_1)$ тоже можно отбросить, получив <<функцию без компонент>>.
По замечанию \ref{rem2} класс эквивалентности один. В целом же
прообраз точки из области $\ts{I-III}_*$ содержит три тора.

В итоге для регулярных интегральных многообразий имеем следующую
сводку, которая, естественно, совпадает с результатами \cite{KhBk}:
$\mbf{T}^2$ в областях $\ts{I-IV}_*$, $\ts{II-V}_*$; $2\mbf{T}^2$ в
областях $\ts{I-I}_*$, $\ts{II-II}_*$; $3\mbf{T}^2$ в области
$\ts{I-III}_*$.

Рассмотрим на базе той же булевой функции критические случаи.
Алгоритм вычисления количества компонент связности критических
поверхностей не меняется -- важно правильно указать разбивку
радикалов на группы и тип самой связной компоненты. Результаты
сведены в табл.~\ref{tabsr3}. Для примера разберем более подробно
переходы из области $\ts{I-III}_*$ в лежащие рядом области, имеющие
один и тот же шифр $\ts{I-IV}_*$. В соответствии с принятыми
обозначениями переход вправо (рис.~\ref{fig_sret1},{\it б})
происходит через отрезок $\ts{I-2}_*$, а влево -- через отрезок
$\ts{I-3}_*$. При первом переходе корни $e_1^*,e_2^*$ сливаются в
кратный корень $e_0^*$, а корень $e_3^*$ трансформируется в
сохраняющийся далее единственный корень $e^*$. В строке
$\ts{I-2}_*,b$ ничего не изменилось по сравнению с $\ts{I-III}_*,b$,
тип компоненты -- ${\mbf{T}}^2 $ (на торе не возникает критических
точек). В строке $\ts{I-2}_*,a$ для расчета, по сути дела, также
ничего не изменилось по сравнению с $\ts{I-III}_*,a$, так как
радикалы $R_{12} ,R_{21} $, ранее менявшие знак и потому не
вызывавшие удвоения прообраза, теперь тождественно равны нулю и
также не вызывают удвоения прообраза. Следовательно, эти радикалы
относятся ко второй группе. Однако для этого варианта компонента
выродилась в $S^1$. В результате, интегральная поверхность на
отрезке $\ts{I-2}_*$ есть $2S^1 \cup {\mbf{T}}^2 $. При втором
переходе через отрезок $\ts{I-3}_*$ ситуация иная. Здесь корни
$e_2^*,e_3^*$ сливаются в кратный корень $e_0^*$, а корень $e_1^*$
трансформируется в становящийся затем (в области $\ts{I-IV}_*$)
единственным корень $e^*$. Формально получаем третью строку
табл.~\ref{tabsr3}, такую же, как вторая, но теперь кратный корень
лежит внутри отрезка осцилляции. При этом здесь интегральная
поверхность связна, как и в случае $\ts{I-2}_*,b$. Для перестройки
$3{\mbf{T}}^2 \to {\mbf{T}}^2$ это может быть только $(S^1 \vee S^1
\vee S^1 ) \times S^1 $. В \cite{KhBk} эта перестройка установлена
аналитически, с помощью доказательства удвоения прообраза середины
<<восьмерки>>, имеющейся на плоскости $(s_1 ,v_1)$ ввиду того, что
$e_0^* \in (e^*,e_3)$. При этом доказывается, что при подъеме в
фазовое пространство левая часть <<восьмерки>> накрывается двумя
компонентами, а правая -- одной компонентой дважды. Там же приведены
соответствующие иллюстрации. Теперь же мы получили эти результаты
формальным алгоритмом.

\begin{table}[ht]
\centering
\begin{tabular}{|c| c| c| c| c|}
\multicolumn{5}{r}{\fts{Таблица \myt\label{tabsr3}}}\\
\hline
\begin{tabular}{c}Номер\\сегмента\end{tabular} &\begin{tabular}{c}Область\\изменения $(s_1,s_2)$\end{tabular}&
\begin{tabular}{c}Первая\\группа\end{tabular}
&\begin{tabular}{c}Вторая\\группа\end{tabular}&\begin{tabular}{c}Количество\\классов\end{tabular} \\
\hline
$\ts{I-2}_*,a$&$\{e_0^*\}\times[e_1,e_2]$&$R_{11},R_{22}$&$R_{12},R_{21}$&2\\
\hline
$\ts{I-2}_*,b$&$[e^*,e_3]\times[e_1,e_2]$&$R_{22}$&$R_{11},R_{12},R_{21}$&1\\
\hline
$\ts{I-3}_*$&$[e^*,e_3]\times[e_1,e_2]$&$R_{22}$&$R_{11},R_{12},R_{21}$&1\\
\hline
$\ts{6-7}_*$&$[e_2,e_3]\times\{e_1=e^*=0\}$&$R_{12}$&$R_{11},R_{21},R_{22}$&1\\
\hline
$\ts{5-7}_*$&$[e_2=e^*=0,e_3]\times[e_1,e_2=e^*=0]$&$\varnothing$&$R_{11},R_{12},R_{21},R_{22}$&1\\
\hline
$\ts{5-6}_*,a$&$[e_2=e_1^*=0,e_2^*]\times[e_1,e_2=e_1^*=0]$&$\varnothing$&$R_{11},R_{12},R_{21},R_{22}$&1\\
\hline
$\ts{5-6}_*,b$&$[e_3,e_3^*]\times[e_1,e_2=e_1^*=0]$&$\varnothing$&$R_{11},R_{12},R_{21},R_{22}$&1\\
\hline
$\ts{5-5}_*,a$&$\{e_2=e_1^*=0\}\times[e_1,e_1^*]$&$\varnothing$&$R_{11},R_{12},R_{21},R_{22}$&1\\
\hline
$\ts{5-5}_*,b$&$[e_3,e_3^*]\times\{e_2=e_1^*=0\}$&$\varnothing$&$R_{11},R_{12},R_{21},R_{22}$&1\\
\hline
$\ts{5-5}_*,c$&$[e_3^*,e_3]\times[e_1,e_1^*]$&$\varnothing$&$R_{11},R_{12},R_{21},R_{22}$&1\\
\hline
\end{tabular}
\end{table}

Рассмотрим еще с этой же точки зрения луч $\{k = 0,h \geqslant 1\}$.
В принятом интервале изменения $\lambda $ он разбит на четыре
качественно различных промежутка:
$$
\begin{array}{ll}
\displaystyle{\ts{6-7}_*: h \in (- 1,  - 1 + \frac{\lambda ^2
}{2});} & \displaystyle{\ts{5-7}_*: h\in ( - 1 + \frac{\lambda
^2}{2}, 1);} \\[3mm]
\displaystyle{\ts{5-6}_*: h\in (1, 1 +
\frac{\lambda ^2 }{2});} & \displaystyle{\ts{5-5}_*: h \in (1 +
\frac{\lambda ^2 } {2},+\infty).}
\end{array}
$$
Необходимая информация приведена начиная с четвертой строки
табл.~\ref{tabsr3}. Как видно из последнего столбца, во всех случаях
связная область плоскости $(s_1,s_2)$ накрывается связной
интегральной поверхностью. Установим топологический тип этой
поверхности. В случаях $\ts{6-7}_*$, $\ts{5-5}_*,a$, $\ts{5-5}_*,b$
одна из переменных не изменяется, поэтому компонента есть $S^1$. В
случае $\ts{5-5}_*,c$ изменение переменных не встречает кратных
корней, поэтому компонента регулярна -- тор ${\mbf{T}}^2 $. В случае
$\ts{5-7}_*$ имеем одну компоненту с топологическим типом расслоения
над <<восьмеркой>> со слоем окружность. При переходе через этот
участок, согласно составленной ранее таблице регулярных
многообразий, один тор преобразуется в один. Следовательно,
поверхность в этом случае есть косое произведение <<восьмерки>> на
окружность (обозначим ее через $(S^1 \vee S^1 ) * S^1 $). Эта
бифуркация впервые была установлена в работе \cite{Kh832}, а в книге
\cite{KhBk} приведено строгое аналитическое доказательство с
разбором поведения всех фазовых траекторий. Приведенный здесь подход
дает необходимый результат напрямую. В случае $\ts{5-6}_*$ имеем две
компоненты. При этом обе имеют тип расслоения над <<восьмеркой>> со
слоем окружность. Но при переходе через этот участок, как было
установлено выше, три тора преобразуются в два. Поэтому одно из
расслоений тривиально $(S^1 \vee S^1 ) \times S^1 $, второе -- косое
произведение <<восьмерки>> на окружность $(S^1  \vee S^1)*S^1$.

В итоге для $\{k = 0,h \geqslant 1\}$ получаем следующую сводку
интегральных поверхностей: $S^1$ на участке $\ts{6-7}_*$;  $(S^1
\vee S^1 ) * S^1 $ на участке $\ts{5-7}_*$; объединение поверхностей
$(S^1\vee S^1)*S^1$ и $(S^1\vee S^1)\times S^1$ на участке
$\ts{5-6}_*$; $2S^1\cup{\mbf{T}}^2$ на участке $\ts{5-5}_*$.

\subsection{Случай Ковалевской. Достижимые области} В уравнениях
(\ref{eq4_1}) положим $\mbf{I}=\mathop{\rm diag}\nolimits
\{2,2,1\}$, $\mbf{r} = (1,0,0)$, ${\boldsymbol \lambda}=0$. Для
геометрического интеграла сохраним константу (\ref{eq4_2}). Общие
интегралы
\begin{equation}\label{eq4_9}
\begin{array}{l}
H = \omega _1^2  + \omega _2^2 + \ds{\frac{1}{2}}\omega _3^2 -
\alpha _1, \qquad
G = 2(\omega _1 \alpha _1  + \omega _2 \alpha _2 ) + \omega _3 \alpha _3, \\
K=(\omega_1^2-\omega^2_2+\alpha_1)^2+(2\omega_1\omega_2+\alpha_2)^2.
\end{array}
\end{equation}

Ограничения на уровни интеграла $G$ -- вполне интегрируемые
гамильтоновы системы с двумя степенями свободы. Их топологический
анализ, включающий построение бифуркационных диаграмм, определение
топологического типа регулярных интегральных многообразий и
критических интегральных поверхностей, описание бифуркаций вдоль
всех путей в допустимой области выполнен в \cite{Kh831,Kh832, KhBk}.
Инварианты Фоменко -- Цишанга и описание круговых молекул приведены
в \cite{BolFomRus} и \cite{BolRich}.

Следуя С.В.\,Ковалевской \cite{Kowa}, вводим комплексные переменные
\begin{equation}\notag
\begin{array}{ll}
   {w_1  = \omega _1  + \ri\; \omega _2 ,} & {w_2  = \omega _1  - \ri\;\omega _2 ,}\\
   {x_1  = \alpha _1  + \ri\; \alpha _2 ,} & {x_2  = \alpha _1  - \ri\;\alpha _2 .}
\end{array}
\end{equation}
Уравнения разделяются в переменных Ковалевской $s_1 ,s_2$, которые
определяются как корни квадратного уравнения
\begin{equation}\label{eq4_10}
(s - h)^2 - \frac{ 2R(w_1 ,w_2)} { (w_1 - w_2)^2} (s - h) - \frac{
R_1(w_1,w_2)} { (w_1 - w_2)^2}  = 0,
\end{equation}
где
\begin{equation}\notag
\begin{array}{l}
R_1 (w_1 ,w_2 ) =  - 2hw_1^2 w_2^2  - 4gw_1 w_2 (w_1  + w_2 ) - (1 - k)(w_1  + w_2 )^2  - 4g^2 ,  \\
R(w_1 ,w_2 ) =  - w_1^2 w_2^2  + 2hw_1 w_2  + 2g(w_1  + w_2 ) + 1 -
k.
\end{array}
\end{equation}
Наряду с этим используется также обозначение многочлена от одной
переменной
\begin{equation}\notag
R(w) =  - w^4  + 2hw^2  + 4gw + 1 - k.
\end{equation}
Разделенная система имеет вид
\begin{equation}\label{eq4_11}
(s_2  - s_1 )\frac{ds_1 } {dt} = \ri \sqrt {2S(s_1 )} ,\quad (s_2  -
s_1 ) \frac{ds_2 } {dt} =  - \ri \sqrt {2S(s_2 )} ,
\end{equation}
где
\begin{eqnarray}
& & S(s) = (s - h + \sqrt k )(s - h - \sqrt k )\varphi (s),\label{eq4_12}\\
& & \varphi (s) = s(s - h)^2  + (1 - k)s - 2g^2 . \notag
\end{eqnarray}
Считаем, что всегда
\begin{equation}\label{eq4_13}
s_1  \geqslant s_2 .
\end{equation}

Максимальный многочлен задачи -- многочлен (\ref{eq4_12}).
Бифуркационная диаграмма $\Sigma $ интегралов (\ref{eq4_9})
содержится в дискриминантном множестве $\tilde \Sigma$ многочлена
$S(s;{\mathbf{f}})$, ${\mathbf{f}} = (g,h,k)$. Допустимая область
есть $\mathop{\rm Im}\nolimits \mF$, где $\mF=G{\times}K{\times}H$,
так что $\Sigma = \tilde \Sigma \cap \mathop{\rm Im}\nolimits \mF$
есть часть $\tilde \Sigma$, отвечающая непустым интегральным
поверхностям, то есть, в свою очередь, непустым достижимым областям
на плоскости переменных $s_1 ,s_2$. Из уравнений $S(s;{\mathbf{f}})
= 0$ и $S'_s (s;{\mathbf{f}}) = 0$ получаем структуру $\tilde \Sigma
$ в виде объединения поверхностей:

1) $\pi _1 $: $k = 0$ (1-й класс Аппельрота);

2) $\pi _2 $: $h = 2g^2  - \sqrt k $ (2-й класс Аппельрота);

3) $\pi _3 $: $h = 2g^2  + \sqrt k $ (3-й класс Аппельрота);

4) поверхность кратных корней многочлена $\varphi (s;{\mathbf{f}})$
(4-й класс Аппельрота)
\begin{equation}\label{eq4_14}
h = s + \ds{\frac{g^2}{s^2}},\quad k = 1 - \ds{\frac{2g^2}{s}} +
\ds{\frac{g^4}{s^4}},
\end{equation}
которая естественным образом распадается на три участка ${\pi _4: {s
< 0}}$, ${\pi _5: {0 < s < s_0}}$,  ${\pi _6: {s > s_0}}$. Здесь
$s_0  = \root 3 \of {2g^2 } $ -- значение параметра в точках ребра
возврата.

\begin{figure}[ht]
\centering
\includegraphics[width=100mm,keepaspectratio]{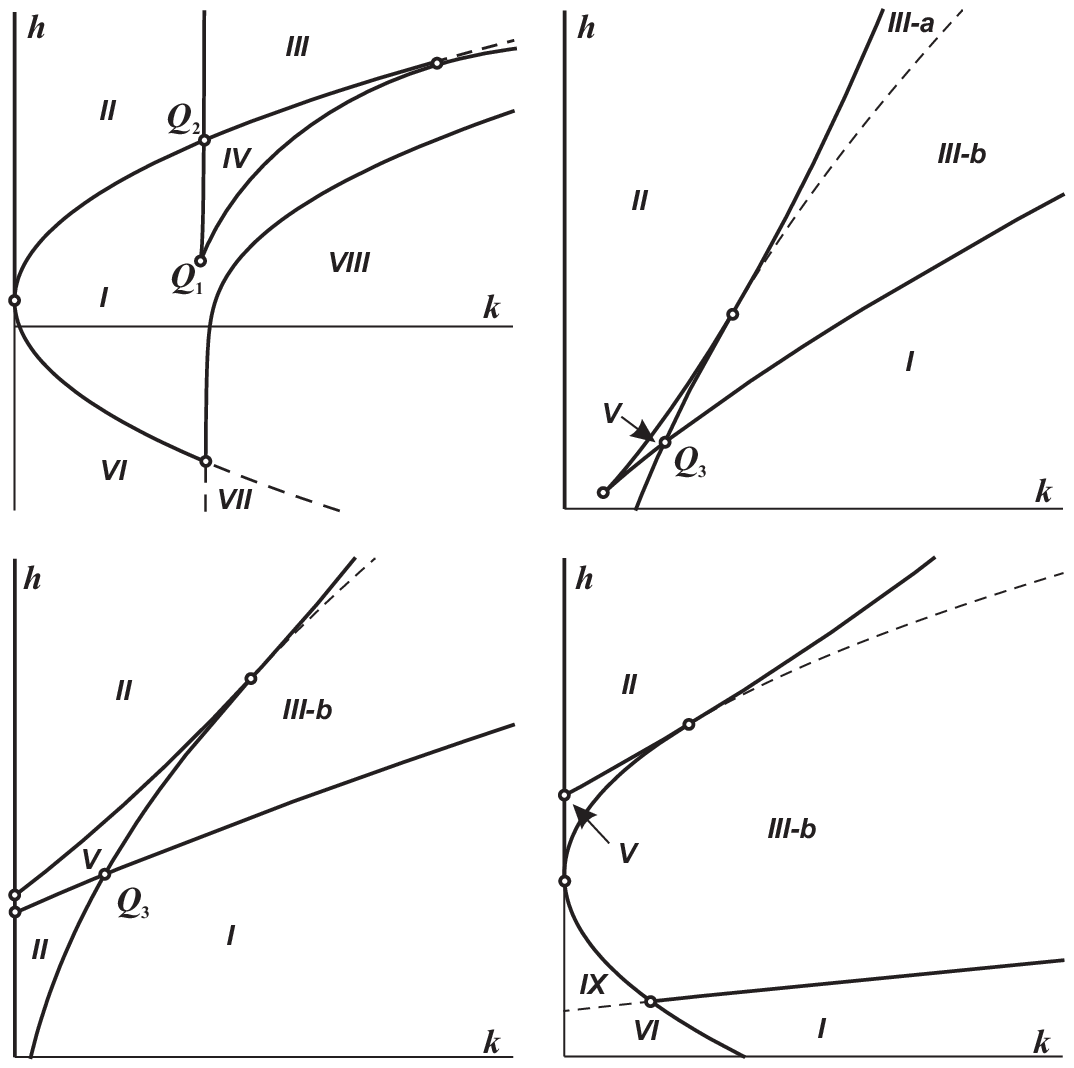}
\caption{Разделяющие множества и кодировка
областей.}\label{fig_kow1}
\end{figure}

Поскольку разрешимыми системами здесь являются ограничения на уровни
интеграла $G$, то естественно рассматривать сечения дискриминантной
поверхности плоскостями $g = {\rm{const}}$, так что все $\pi _i $
трактуются как плоские кривые, которые ниже мы называем
разделяющими. Как показано в \cite{Kh831}, различные типы сечений
(устойчивые по параметру $g$) имеют место в случаях: 1) $0 < g^2  <
1/2$; 2) $1/2 < g^2  < 4/(3\sqrt 3 )$; 3) $4/(3\sqrt 3 ) < g^2  <
1$; 4) $g^2  > 1$. В силу очевидной симметрии считаем в дальнейшем,
что
\begin{equation}\label{eq4_15}
g > 0.
\end{equation}

На рис.~\ref{fig_kow1} показаны ключевые участки сечений $\tilde
\Sigma $, поясняющие нумерацию областей $\ts{I}$ -- $\ts{IX}$, на
которые эта поверхность разбивает полупространство ${\bR}^3
\backslash \{ k < 0\}$. На этом же рисунке отмечены особые точки
$Q_1 - Q_3$, о которых речь пойдет позже.

Обозначения $\ts{III-a}$ и $\ts{III-b}$ введены для удобства
сопоставления с нумерацией областей, принятой после работ
\cite{Kh831,Kh832}. Как доказано еще Г.Г.\,Аппельротом \cite{Appel},
на участке разделяющей кривой между этими областями никаких
бифуркаций не происходит. Подчеркнем, что в данном контексте это
неважно.

Условимся о следующих обозначениях. Если многочлен $\varphi (s)$
имеет три различных вещественных корня, то обозначим их в порядке
возрастания
\begin{equation}\label{eq4_16}
e_1  < e_2  < e_3 .
\end{equation}
Если вещественный корень единственный, то это будет $e_3 $, а $e_1 $
и $e_2 $ будут комплексно сопряжены. Обозначим также остальные два
корня максимального многочлена через
\begin{equation}\label{eq4_17}
e_4  = h - \sqrt k  < e_5  = h + \sqrt k .
\end{equation}

Вначале решаем задачу определения промежутков изменения $s_1 ,s_2$.
Условий ${S(s_i)\leqslant 0}$, вытекающих непосредственно из
уравнений (\ref{eq4_11}), для существования вещественных решений
системы (\ref{eq4_2}), (\ref{eq4_9}) недостаточно. Однако на этом
этапе нет и необходимости использовать явные выражения всех фазовых
переменных через $s_1 ,s_2 $. Важную роль во всех исследованиях
случая Ковалевской играют проекции интегральных многообразий на
плоскость $(\omega _1 ,\omega _2 )$, поскольку именно через эти
переменные согласно (\ref{eq4_10}) определяются переменные
разделения. Движения, в которых $\omega _2 \equiv 0$, называются
исключительными \cite{Appel}, и именно они порождают поверхность
(\ref{eq4_14}). В частности, во всех остальных случаях множество
точек интегрального многообразия, в которых $\omega _2  = 0$, имеет
на этом многообразии меру нуль. Поэтому верно следующее утверждение:
в случае отсутствия кратных корней у максимального многочлена для
почти всех точек интегрального многообразия значения переменных
$\alpha_1, \alpha_2$ однозначно определяются по значениям переменных
$\omega _1, \omega_2, \omega_3 ,\alpha_3$ из уравнений первых
интегралов (\ref{eq4_9}). В то же время $w_1 ,w_2$ однозначно
связаны с $\omega _1 ,\omega _2$, и, как показано в
\cite{Appel,Ipat}, переменные $w_1 ,w_2 ,\omega_3 , \alpha_3$
полностью определены следующими двумя системами уравнений:
\begin{equation}
\displaystyle{\frac{\sqrt {e_\gamma}}{\sqrt{2}}} (w_1  - w_2
)R_{1\gamma } R_{2\gamma } = - e_\gamma ^2  + (w_1 w_2  + h)e_\gamma
+ (w_1  + w_2 )g \quad (\gamma  = 1 \div 3); \label{eq4_18}
\end{equation}
\begin{equation}
\begin{array}{l} \displaystyle{ \omega_3 w_1  + \alpha_3 =
\ds{\frac{\sqrt{R(w_1 )}} {s_2  - s_1 }}
(R_{24} R_{25}  + R_{14} R_{15} ),}  \\
\omega_3 w_2  + \alpha_3 = \ds{\frac{\sqrt{R(w_2 )}} {s_2  - s_1
}}(R_{24} R_{25}  - R_{14} R_{15} ).
\end{array}\label{eq4_19}
\end{equation}
Здесь $R_{i\gamma } = \sqrt{s_i-e_\gamma}$ ($i = 1,2$; $\gamma  =
1\div 5$) являются базисными радикалами.

\begin{remark}\label{remk1}
При выводе этих уравнений предполагается, что выбраны комплексно
сопряженные значения величин $\sqrt {R(w_1 )}$, $\sqrt {R(w_2 )}$, а
константы $\sqrt {e_\gamma}$ взяты с такими знаками, что
\begin{equation}\notag
2g = \sqrt {\mathstrut 2} \sqrt {\mathstrut e_1 } \sqrt {\mathstrut
e_2 } \sqrt {\mathstrut e_3 } .
\end{equation}
Согласно предположению $(\ref{eq4_15})$ и обозначению
$(\ref{eq4_16})$ можем считать всегда
\begin{equation}\notag
\sqrt {e_3 }  > 0,\quad \sqrt {e_1 } \sqrt {e_2 }  > 0.
\end{equation}
\end{remark}

Три уравнения (\ref{eq4_18}) линейно зависимы в силу связи величин
$e_\gamma$ с параметрами $h,g$. Однако их удобно использовать для
выражения переменных $w_1 ,w_2 $ и для исследования условий
существования решений. Для вещественных $e_\gamma$ правые части
(\ref{eq4_18}) вещественны, разность $w_1 - w_2 $ всегда чисто
мнимая. Поэтому условием разрешимости (\ref{eq4_18}) является
следующее свойство: для вещественных корней $e_\gamma  $ ($\gamma  =
1\div 3$) выполнены неравенства
\begin{equation}\notag
e_\gamma  R_{1\gamma }^2 R_{2\gamma }^2  > 0.
\end{equation}
Здесь уместно напомнить, что достаточно решить задачу для параметров
${\mathbf{f}}$, при которых кратные корни отсутствуют, в связи с чем
пишем строгие неравенства, соответствующие внутренней точке области
$\A({\mathbf{f}})$. По замечанию~\ref{remk1} система~(\ref{eq4_19})
разрешима тогда и только тогда, когда правые части взаимно
сопряжены, что равносильно условиям
\begin{equation}\notag
R_{14}^2 R_{15}^2  < 0,\quad R_{24}^2 R_{25}^2  > 0.
\end{equation}
Обозначим
\begin{equation}\label{eq4_20}
u_\gamma   = \lsgn R_{1\gamma }^2 ,\quad u_{5 + \gamma } = {\lsgn }
R_{2\gamma }^2 ,
\end{equation}
где $\gamma  = 1 \div 5$, если $\varphi (s)$ имеет три вещественных
корня, и $\gamma  = 3 \div 5$, если $e_1 ,e_2 $ -- комплексные.
Вводя компоненты
\begin{equation}\notag
z_\gamma   = u_\gamma   \oplus u_{5 + \gamma } \;(\gamma  \leqslant
3),\quad z_4  = u_4  \oplus u_5 ,\quad z_5  = u_9  \oplus u_{10} ,
\end{equation}
получаем алгебраическую булеву вектор-функцию ${\mathbf{z}} =
{A}({\mathbf{u}})$, где ${\mathbf{u}} \in \mB^{10}$, ${\mathbf{z}}
\in \mB^5$ или ${\mathbf{u}} \in \mB^6$, ${\mathbf{z}} \in \mB^3$ в
зависимости от количества вещественных корней $e_\gamma$. Матрица
$A$ в обоих случаях имеет максимальный ранг. Условия существования
внутренней точки множества $\A({\mathbf{f}})$ принимают вид
\begin{equation}\label{eq4_21}
{\mathbf{z}_0}={A} ({\mathbf{u}}) = \left\{
\begin{array}{lll}
{11110,} & {e_1 ,e_2  \in {\bR},} & {e_1  > 0,} \\
{00110,} & {e_1 ,e_2  \in {\bR},} & {e_2  < 0,} \\
{110,} & {e_1 ,e_2  \notin {\bR}.} & {}
\end{array} \right.
\end{equation}
Размерность пространства решений равна 5 в первых двух случаях и 3
-- в последнем. Количество решений, соответственно, 32 и 8. Поэтому
систем неравенств для анализа слишком много. Вспомним, однако, что
здесь дополнительно можно учесть условия, заведомо существующие в
силу (\ref{eq4_13}), (\ref{eq4_16}) и (\ref{eq4_17}). В случае
одного вещественного корня к функции ${A}({\mathbf{u}})$ припишем
компоненты
\begin{equation}\label{eq4_22}
z_6  = u_3  \to u_8 ,\quad z_7  = u_4  \to u_9 ,\quad z_8  = u_5 \to
u_{10} ,
\end{equation}
согласно неравенству $s_1  > s_2$ и компоненты
\begin{equation}\label{eq4_23}
z_9  = u_4  \to u_5 ,\quad z_{10}  = u_9  \to u_{10} ,
\end{equation}
согласно неравенству $e_4 < e_5 $. Если вещественных корней три, то
к (\ref{eq4_22}) добавляются
\begin{equation}\notag
z_{11}  = u_1  \to u_6 ,\quad z_{12}  = u_2  \to u_7 ,
\end{equation}
а к (\ref{eq4_23}) -- компоненты-импликации, соответствующие порядку
$e_1 < e_2 < e_3$:
\begin{equation}\notag
\begin{array}{l}
z_{13}  = u_1  \to u_2 ,\quad z_{14}  = u_6  \to u_7 ,  \\
z_{15}  = u_2  \to u_3 ,\quad z_{16}  = u_7  \to u_8 .
\end{array}
\end{equation}
Потребовав для всех компонент-импликаций $z_j  = 1$, то есть,
приписав справа к значению ${\mathbf{z}_0}$ в (\ref{eq4_21})
соответствующее количество единиц, найдем, что во всех случаях
прообраз состоит ровно из одной точки:
\begin{equation}\notag
{A}^{ - 1} ({\mathbf{z}_0}) = \left\{
\begin{array}{lll}
{\{ 0000111111\} ,}  & {e_1 ,e_2  \in {\bR},}  & {e_1  > 0,}   \\
{\{ 0000100111\} ,}  & {e_1 ,e_2  \in {\bR},}  & {e_2  < 0,}   \\
{\{ 001111\} ,}  & {e_1 ,e_2  \notin {\bR}.}  & {}
\end{array} \right.
\end{equation}
Отсюда сразу же определяются промежутки осцилляции $s_1 ,s_2$:
\begin{eqnarray}
& & e_1 ,e_2  \notin {\bR} \Rightarrow \left\{
\begin{array}{l}
   {s_1  \in [\max \{ e_3 ,e_4 \} ,e_5 ]}   \\
   {s_2  \in [ - \infty ,\min \{ e_3 ,e_4 \} ]}
\end{array} \right.,\label{eq4_24}\\
& & e_1 ,e_2  \in {\bR},\;e_1  > 0 \Rightarrow \left\{
\begin{array}{l}
   {s_1  \in [\max \{ e_1 ,e_2 ,e_3 ,e_4 \} ,e_5 ]}   \\
   {s_2  \in [ - \infty ,\min \{ e_1 ,e_2 ,e_3 ,e_4 ,e_5 \} ]}
\end{array} \right.,\label{eq4_25} \\
& & e_1 ,e_2  \in {\bR},\;e_2  < 0 \Rightarrow \left\{
\begin{array}{l}
   {s_1  \in [\max \{ e_1 ,e_2 ,e_3 ,e_4 \} ,e_5 ]}   \\
   {s_2  \in [\max \{ e_1 ,e_2 \} ,\min \{ e_3 ,e_4 ,e_5 \} ]}
\end{array} \right..\label{eq4_26}
\end{eqnarray}
С учетом расположения корней в областях \ts{I} -- \ts{IX}, получаем
классификацию достижимых областей согласно табл.~\ref{tabkow1}
(обозначения в последнем столбце поясним позже).
\begin{table}[ht]
\centering
\begin{tabular}{|c| c| c| c| c|}
\multicolumn{5}{r}{\fts{Таблица \myt\label{tabkow1}}}\\
\hline
\begin{tabular}{c}Номер\\области\end{tabular} &\begin{tabular}{c}Корни\end{tabular}
&\begin{tabular}{c}Область\\изменения $s_1$\end{tabular}&\begin{tabular}{c}Область\\изменения $s_2$\end{tabular}
&\begin{tabular}{c}Первая\\группа\end{tabular} \\
\hline
\ts{I}& $e_4<e_3<e_5$&$[\,e_3,e_5\,]$&$[-\infty,e_4\,]$&$124678^*$\\
\hline
\ts{II}& $e_3<e_4<e_5$&$[\,e_4,e_5\,]$&$[-\infty,e_3\,]$&$123679^*$\\
\hline
\ts{III-a}&$0<e_1<e_4<e_2<e_3<e_5$&$[\,e_3,e_5\,]$&$[-\infty,e_1\,]$&124789\\
\hline
\ts{III-b}&$0<e_1<e_2<e_4<e_3<e_5$&$[\,e_3,e_5\,]$&$[-\infty,e_1\,]$&124789\\
\hline
\ts{IV}&$e_4<e_1<e_2<e_3<e_5 \, (e_1>0)$&$[\,e_3,e_5\,]$&$[-\infty,e_4\,]$&124678\\
\hline \ts{V}&$0<e_1<e_2<e_3<e_4<e_5$ &
$[\,e_4,e_5\,]$&$[-\infty,e_1\,]$&123789\\
\hline \ts{VI}&$e_4<e_5<e_3$&$\varnothing$&$[-\infty,e_4\,]$&{}\\
\hline
\ts{VII}&$e_4<e_1<e_2<e_5<e_3 \, (e_2<0)$&$\varnothing$&$\varnothing$&{}\\
\hline \ts{VIII}&$e_4<e_1<e_2<0<e_3<e_5$&$[\,e_3,e_5\,]$&$\varnothing$&{}\\
\hline \ts{IX}&$0<e_1<e_2<e_4<e_5<e_3$&$\varnothing$&$[-\infty,e_1\,]$&{}\\
\hline
\end{tabular}
\end{table}
Таким образом, в областях \ts{VII} и \ts{VIII}, отвечающих случаю
(\ref{eq4_26}), движения невозможны, так как $e_4<\max\{e_1,e_2\}$.
В области \ts{VI} имеем $\max\{e_3,e_4\}>e_5$, что противоречит
(\ref{eq4_24}), а в области \ts{IX} неравенство $e_5< e_4$
противоречит (\ref{eq4_25}). Поэтому допустимая область $\mathop{\rm
Im}\nolimits \mathcal{F}$ в пространстве постоянных
${\mathbf{f}=(g,h,k)}$ состоит из областей \ts{I} -- \ts{V} и
примыкающих к ним участков разделяющего множества, которые,
следовательно, и формируют бифуркационную диаграмму $\Sigma$. Мы
снова обошлись без аналитических исследований, сразу получив
результаты работ \cite{Appel,Ipat} о допустимых областях и
\cite{Kh831,Kh832} о бифуркационной диаграмме.

Условия $s_1 = {\rm{const}}$ и $s_2  = {\rm{const}}$ при
фиксированных постоянных ${\mbf{f}}$, в силу определения
(\ref{eq4_10}), можно рассматривать как уравнения координатной сети
на плоскости $(\omega _1 ,\omega _2 )$. Это впервые сделал
Н.Е.\,Жуковский в работе \cite{Zhuk}. Он изучил свойства этой сети и
указал достижимые области $\A_\omega(\mbf{f})$ на плоскости $(\omega
_1 ,\omega _2 )$, то есть, фактически, проекции на эту плоскость
интегральных многообразий. Более подробное аналитическое
исследование таких областей и их геометрическое представление даны в
работах \cite{Appel,Ipat}. Оказалось, что при отсутствии кратных
корней связные компоненты множества $\A_\omega (\mbf{f})$ с
точностью до диффеоморфизма бывают только двух видов --
прямоугольник и кольцо. В работе \cite{Kh831} показано, что в
прямоугольник проецируется один тор, а в кольцо -- два. Из этого
сделан вывод о количестве торов в составе $\mathcal{F}_{\mbf{f}} $ в
облас\-тях~\mbox{\ts{I}\,--\,\ts{V}}.
\begin{remark}\label{remk2}
Как показано в {\rm \cite{Kh831}}, каждая внутренняя точка области
$\A_\omega (\mbf{f})$ накрывается четырьмя точками
$\mathcal{F}_\mbf{f}$, прообразом граничной точки кольца и точки на
стороне прямоугольника служат две точки интегрального многообразия,
а прообраз вершины прямоугольника состоит ровно из одной точки
$($подробности см. в {\rm \cite{KhBk}}$).$ Из последнего свойства с
необходимостью следует, что прообраз прямоугольника имеет одну
связную компоненту, то есть состоит из одного тора. Что касается
кольца, ответ не столь очевиден. Такую же картину по количеству
прообразов можно получить при двойном накрытии кольца одним тором,
положив, например, в полярных координатах $(\rho ,\theta )$
плоскости
\begin{equation}\notag
\rho  = 2 + \cos \varphi _1 ,\quad \theta  = 2\varphi _2 \quad
(\varphi _1 ,\varphi _2 \, {\rm modd}\, 2\pi ).
\end{equation}
Поэтому строгое доказательство утверждений о количестве связных
компонент дал анализ проекций интегральных многообразий на сферу
Пуассона. Ниже мы покажем, как доказываются утверждения {\rm
\cite{Kh831}} методом данной работы.
\end{remark}

В качестве дополнительной иллюстрации работы метода установим
количество компонент в областях Жуковского $\A_\omega (\mbf{f})$.
Решение системы (\ref{eq4_18}) имеет вид \cite{Kowa,Appel}
\begin{equation}\label{eq4_27}
\displaystyle{ \omega _1  = -\frac{\sum
\displaystyle{\frac{\sqrt{e_\delta}\sqrt{e_\nu}}{\varphi'(e_\gamma)}}R_{1
\gamma}R_{2 \gamma}}{\sqrt{2}\sum
\displaystyle{\frac{\sqrt{e_\gamma}}{\varphi'(e_\gamma)}}R_{1
\gamma}R_{2 \gamma} },}\qquad \displaystyle{ \omega _2  =
\frac{\ri}{\sqrt{2}\sum
\displaystyle{\frac{\sqrt{e_\gamma}}{\varphi'(e_\gamma)}}R_{1
\gamma}R_{2 \gamma} }.}
\end{equation}
Здесь принимается введенная еще Ковалевской договоренность об
упрощенном обозначении: суммирование ведется по $\gamma $ в пределах
от 1 до 3, в слагаемом с номером $\gamma $ индексы $\delta ,\nu $
означают два числа из набора (1,2,3), отличные от $\gamma$. Итак,
точка $\Omega (\omega _1 ,\omega _2 )$ однозначно определяется
знаками трех произведений базисных радикалов
\begin{equation}\label{eq4_28}
P_\gamma   = R_{1\gamma } R_{2\gamma }.
\end{equation}
В частности, каждой точке $(s_1 ,s_2 ) \in {\mathop{\rm
Int}\nolimits} \A({\mbf{f}})$ соответствует 8 точек из
$\A_\omega(\mbf{f})$.

Отбросим квадраты в определении булевых аргументов (\ref{eq4_20}):
\begin{equation}\label{eq4_29}
u_\gamma   = {\lsgn } R_{1\gamma } ,\quad u_{5 + \gamma }  = {\lsgn
} R_{2\gamma } \quad (\gamma = 1 \div 5).
\end{equation}
Из перечня достижимых областей (табл.~\ref{tabkow1}) видно, что к
первой группе по переменной $s_1$ относятся булевы переменные $u_1
,u_2 ,u_3$, если $s_1 $ осциллирует в $[\,e_4 ,e_5 \,]$, и $u_1 ,u_2
,u_4$, если $s_1$ осциллирует в $[\,e_3 ,e_5 \,]$. Отмечаем их
номера в последнем столбце той же таблицы.

Переменная $s_2$ всегда изменяется в бесконечной полуполосе, то есть
периодически, в конечные моменты времени, <<отражается>> от значения
$- \infty$. При прохождении этого значения формально следует считать
одновременно меняющими знак все базисные радикалы, содержащие $s_2$,
однако, это не отражает существа дела, поскольку в зависимостях вида
(\ref{eq4_27}) для подобных случаев особенность в бесконечности
всегда устранима. Поэтому поступим следующим образом. Обозначим
\begin{equation}\label{eq4_30}
\begin{array}{l}
\displaystyle{ R^*_{2 \gamma} = \sqrt { \frac{s_2  - e_\gamma  }
{s_2  - e_5 }} \quad (\gamma  = 1 \div 4),}  \qquad \displaystyle{
R^*_{2 5}  = \frac{1}{\sqrt {s_2 - e_5 }} .}
\end{array}
\end{equation}
Переопределим переменные, отвечающие знакам $R_{2\gamma } $ в
(\ref{eq4_29}), полагая
\begin{equation}\notag
\begin{array}{l}
\displaystyle{ u_{5 + \gamma }  = {\lsgn } R^*_{2 \gamma} \quad
(\gamma  = 1 \div 5).}
\end{array}
\end{equation}
Поскольку переменная $s_2 $ никогда не принимает значение $e_5  =
\max \{ e_\gamma  \} $, то при ее колебаниях на любом промежутке
вида $[ - \infty ,e_\delta  \,]$, где $e_\delta   = \min \{ e_\gamma
\} $, меняющими знак надо считать ровно две из введенных величин, а
именно,
\begin{equation}\notag
\displaystyle{\sqrt {\frac{s_2  - e_\delta  } {s_2  - e_5 }} ,\quad
\frac{1} {\sqrt {s_2  - e_5 }},}
\end{equation}
то есть к первой группе не относятся переменные $u_{5 + \delta}$ и
$u_{10}$. Итак, согласно таблице достижимых областей, к первой
группе по переменной $s_2$ относим булевы переменные $u_7 ,u_8 ,u_9
$, если $s_2 $ осциллирует в $[ - \infty ,e_1 \,]$, $u_6 ,u_7 ,u_9
$, если $s_2$ осциллирует в $[ - \infty ,e_3 \,]$ и $u_6 ,u_7 ,u_8
$, если $s_2 $ осциллирует в $[ - \infty ,e_4 \,]$. И эти номера
отмечаем в последнем столбце табл.~\ref{tabkow1}. Строки индексов,
помеченные звездочкой, соответствуют случаям, когда согласно
замечанию \ref{rem1} при комплексно сопряженных $e_1,e_2$
рассматриваются только наборы аргументов, у которых
\begin{equation}\label{eq4_31}
u_1  \oplus u_2=0, \qquad u_6 \oplus u_7=0.
\end{equation}
Теперь в зависимостях точки $\Omega$ от $s_1,s_2$ мы должны заменить
радикалы (\ref{eq4_28}) на
\begin{equation}\label{eq4_32}
P^*_\gamma   = R_{1\gamma } R^*_{2\gamma }.
\end{equation}
Обозначим
\begin{equation}\notag
y_\gamma = \lsgn P^*_\gamma = u_\gamma \oplus u_{5+\gamma} \quad
(\gamma=1\div 3).
\end{equation}
Представим зависимости (\ref{eq4_27}) в виде
\begin{equation}\label{eq4_33}
\omega_1=\frac{a'_0+a'_1 P^*_1 P^*_3+a'_2 P^*_2 P^*_3}{a_0+a_1 P^*_1
P^*_3+a_2 P^*_2 P^*_3}, \qquad \omega_2=\frac{a'' R^*_{25}
P^*_3}{a_0+a_1 P^*_1 P^*_3+a_2 P^*_2 P^*_3}.
\end{equation}
Здесь $a_i,a'_i,a''$ -- однозначные функции от $s_1,s_2$. Числитель
и знаменатель обоих выражений домножены на $R^*_{25} P^*_3$, с тем
чтобы получить однозначные зависимости точек $\Omega$ от комбинаций
радикалов. Для описания надстройки ${(s_1,s_2)\mapsto \Omega}$
вводим АБВФ вида
\begin{eqnarray}
& {C}(u_1,...,u_{10})=(z_1,z_2,z_3), \notag \\
& z_1=y_1 \oplus y_3, \quad z_2=y_2 \oplus y_3, \quad
z_3=u_{10}\oplus y_3. \label{eq4_34}
\end{eqnarray}
Из леммы~\ref{lem4} сразу же следует, что, поскольку аргумент
$u_{10}$ всегда относится ко второй группе, значение $z_3$ вместе с
этим аргументом можно отбросить. По лемме~\ref{lem1} можно также не
учитывать фиктивные аргументы $u_4,u_5,u_9$. Таким образом, остается
функция ${C}: \mB^6 \to \mB^2$. В свою очередь, она представляется в
виде ${C}={B}\circ {D}$, где
\begin{equation}\label{eq4_35}
{D}(\mbf{u})=(y_1,y_2,y_3), \qquad {B}(y_1,y_2,y_3)=(y_1 \oplus y_3,
y_2 \oplus y_3),
\end{equation}
и так определенная композиция удовлетворяет всем условиям леммы
\ref{lem7}. Поэтому на самом деле достаточно проанализировать классы
отображения ${B}:\mB^3 \to \mB^2$ всего с тремя булевыми
аргументами, относя их к необходимой группе. Как следует из
табл.~\ref{tabkow1}, в области \ts{III} ко второй группе относятся
аргументы $y_1,y_3$, в области \ts{V} эта группа включает только
$y_1$, а в остальных областях -- только $y_3$.

В областях \ts{I}, \ts{II} условие $P^*_1 P^*_2 >0$ принимает вид
$y_1\oplus y_2=0$, следовательно, пара $(y_1,y_2)$ может принимать
всего два значения $00,11$. Но ${B}(000)={{B}(111)=00}$. Поэтому в
этих областях класс эквивалентности один. В области \ts{III} по
лемме~\ref{lem4} можно вначале отбросить пару $z_1,y_1$, после чего
условиям леммы удовлетворяет и пара $z_2,y_3$. По
замечанию~\ref{rem2} класс эквивалентности один. В области~\ts{IV}
аргументы $y_1,y_2$ образуют два класса $\{00,11\},\{01,10\}$. В
области \ts{V} такие же классы образуют аргументы $y_2,y_3$, но
здесь два класса получаются сразу же исключением пары $z_1,y_1$ по
лемме \ref{lem4}, после чего остаются два аргумента первой группы и
одна зависимая переменная $y_2\oplus y_3$, два значения которой,
очевидно, и порождают разные классы. Итак, множество
$\A_\omega(\mbf{f})$ связно в областях \ts{I} -- \ts{III}, и имеет
две компоненты в областях \ts{IV}, \ts{V}. Все варианты показаны на
рис.~\ref{fig_kow2}, где видно, как происходит разворачивание
полуполосы на плоскости $(s_1,s_2)$ в множество
$\A_\omega(\mbf{f})$. По количеству наборов аргументов булевой
вектор-функции ${B}$, заданной в (\ref{eq4_35}), множество
$\A_\omega(\mbf{f})$ формируется из четырех <<лоскутов>> в областях
\ts{I} -- \ts{II}, и из восьми -- в остальных случаях. Отметим, что
ось $\omega_2=0$ всегда разделяет <<лоскуты>> ввиду присутствия в
числителе соответствующего выражения (\ref{eq4_33}) меняющего знак
радикала $R^*_{25}$. Естественно, имеет место полное совпадение с
результатами \cite{Zhuk}, а рис.~\ref{fig_kow2} можно найти,
например, в \cite{Appel}. Однако нам не пришлось прибегать ни к
каким аналитическим выкладкам, повторить которые за классиками
практически невозможно.

\begin{figure}[ht]
\centering
\includegraphics[width=100mm,keepaspectratio]{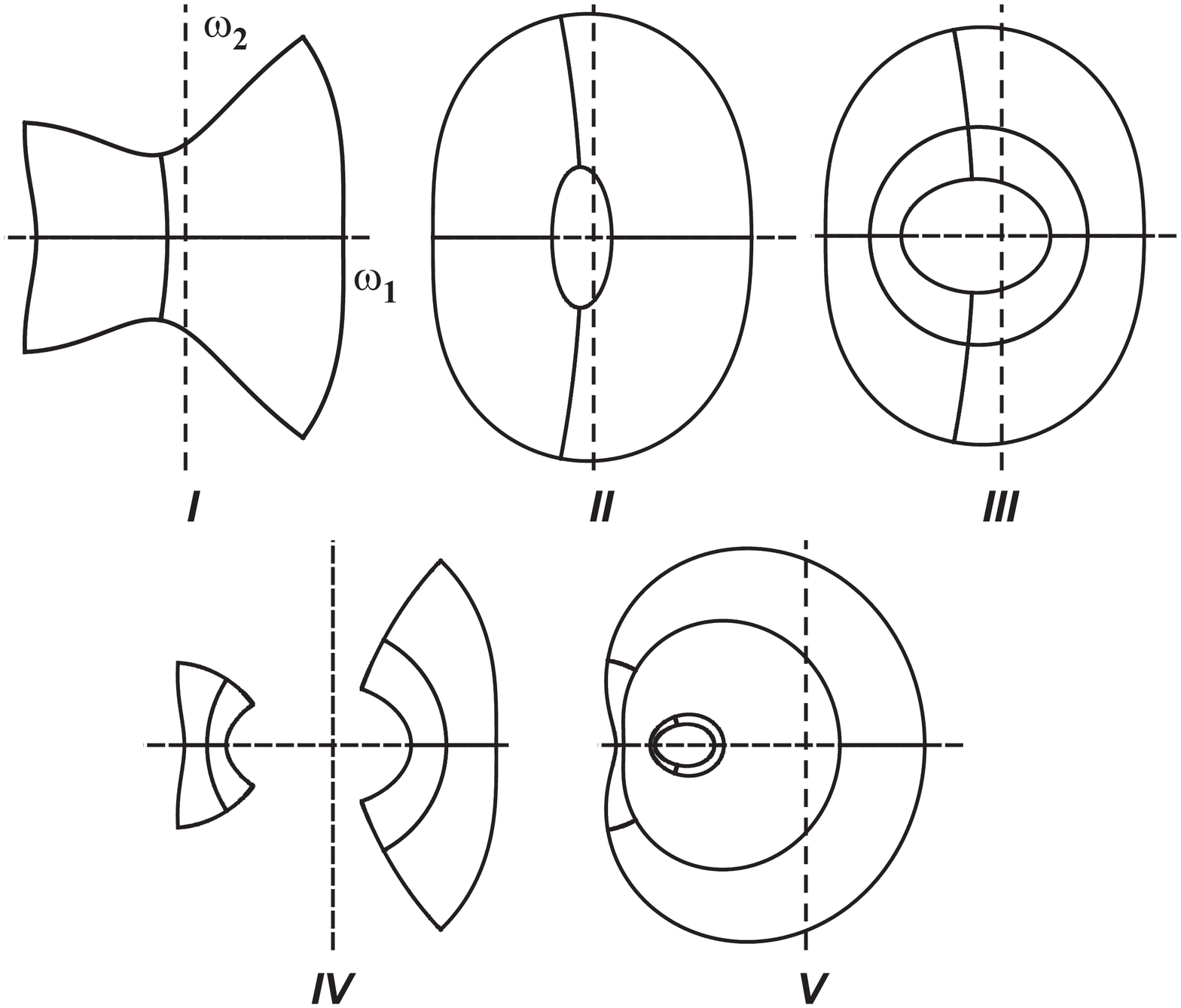}
\caption{Образы $\mF_\mbf{f}$ на плоскости
$(\omega_1,\omega_2)$.}\label{fig_kow2}
\end{figure}

\subsection{Случай Ковалевской. Интегральные многообразия}
Обратимся к задаче определения количества связных компонент в
составе регулярных интегральных многообразий $\mF_\mbf{f}$. Для
полного описания надстройки необходимо к формулам (\ref{eq4_27})
добавить выражения для $\omega_3,\alpha_3$ \cite{Kowa,Appel}:
\begin{equation}\label{eq4_36}
\omega _3  = \sqrt{2} \frac{ \sum
\displaystyle{\frac{\sqrt{e_\gamma}}{\varphi'(e_\gamma)}}P_{\delta
\nu} }{ \sum
\displaystyle{\frac{\sqrt{e_\gamma}}{\varphi'(e_\gamma)}}P_{
\gamma}}, \qquad \alpha_3  = \frac{ \sum
\displaystyle{\frac{\sqrt{e_\delta}\sqrt{e_\nu}}{\varphi'(e_\gamma)}}P_{\delta
\nu} } {\sum
\displaystyle{\frac{\sqrt{e_\gamma}}{\varphi'(e_\gamma)}}P_{\delta
\nu}},
\end{equation}
где
\begin{equation}\notag
\begin{array}{l}
P_{\delta
\nu}=\displaystyle{\frac{P_{\delta}P_{\nu}}{s_1-s_2}}\left\{\displaystyle{\frac{S_1}{(s_1-e_\delta)(s_1-e_\nu)}}
-\displaystyle{\frac{S_2}{(s_2-e_\delta)(s_2-e_\nu)}
}\right\},\\[5mm]
S_1=\sqrt{S(s_1)}, \qquad S_2=\sqrt{S(s_2)}.
\end{array}
\end{equation}

Чтобы использовать уже введенные функции (\ref{eq4_34}), представим
эти выражения в виде
\begin{equation}\notag
\begin{array}{l}
\omega_3=\displaystyle{\frac{\sum b_\gamma P^*_{\delta\nu}}{a_0+a_1
P^*_1 P^*_3+a_2 P^*_2 P^*_3}},\qquad
\alpha_3=\displaystyle{\frac{\sum b'_\gamma P^*_{\delta\nu}}{\sum
b_\gamma P^*_{\delta\nu}}},
\end{array}
\end{equation}
где $a_i $ ($i = 0 \div 2$), $b_\gamma  ,b'_\gamma  $ ($\gamma  = 1
\div 3$) -- однозначные функции переменных $s_1 ,s_2 $. Величины
$a_i $ определены в (\ref{eq4_33}), а $P_{\delta \nu }^* $ получены
из $P_{\delta \nu }^{} $ умножением на $R_{25}^* P_3^* $. Каждое из
$P_{\delta \nu }^* $ представляет собой сумму двух слагаемых
\begin{equation}\label{eq4_37}
c_1 R_{25}^* P_3^* P_\delta ^* P_\nu ^* S_1  + c_2 R_{25}^* P_3^*
P_\delta ^* P_\nu ^* S_2^* .
\end{equation}
Здесь использованы обозначения (\ref{eq4_30}), (\ref{eq4_32}), в
соответствии с которыми
\begin{equation}\notag
S_1  = \prod\limits_{\gamma  = 1}^5 {R_{1\gamma } } ,\quad S_2^*  =
\prod\limits_{\gamma  = 1}^5 {R_{2\gamma }^* } = (R^*_{25})^6 S_2,
\end{equation}
а $c_1 ,c_2 $ -- однозначные функции переменных $s_1 ,s_2$. Поэтому
в дополнение к (\ref{eq4_34}) вводим компоненты АБВФ, отвечающие за
знаки радикалов в (\ref{eq4_37}):
\begin{equation}\notag
\begin{array}{l}
z_4  = \lsgn (R_{25}^* P_2^* S_1 ) = u_1  \oplus u_3  \oplus u_4  \oplus u_5  \oplus u_7  \oplus u_{10} ,  \\
z_5  = \lsgn (R_{25}^* P_1^* S_1 ) = u_2  \oplus u_3  \oplus u_4  \oplus u_5  \oplus u_6  \oplus u_{10} ,  \\
z_6  = \lsgn (R_{25}^* P_1^* P_2^* P_3^* S_1 ) = u_4  \oplus u_5  \oplus u_6  \oplus u_7  \oplus u_8  \oplus u_{10} ,  \\
z_7  = \lsgn (R_{25}^* P_2^* S_2^* ) = u_2  \oplus u_6  \oplus u_8  \oplus u_9 ,  \\
z_8  = \lsgn (R_{25}^* P_1^* S_2^* ) = u_1  \oplus u_7  \oplus u_8  \oplus u_9 ,  \\
z_9  = \lsgn (R_{25}^* P_1^* P_2^* P_3^* S_2^* ) = u_1  \oplus u_2
\oplus u_3  \oplus u_9 .
\end{array}
\end{equation}
Матрица построенной булевой вектор-функции ${C} : \mB^{10} \to
\mB^9$ такова (нуль изображаем пустой клеткой, чтобы не загромождать
матрицу):
\begin{equation}\notag
\begin{array}{c||c  c  c  c c c c c c c ||} \multicolumn{1}{c}{} &
\multicolumn{1}{c}{u_1 } & \multicolumn{1}{c}{u_2 } &
\multicolumn{1}{c}{u_3 } & \multicolumn{1}{c}{u_4 } &
\multicolumn{1}{c}{u_5 } & \multicolumn{1}{c}{u_6 } &
\multicolumn{1}{c}{u_7 } & \multicolumn{1}{c}{u_8 } &
\multicolumn{1}{c}{u_9 } &
\multicolumn{1}{c}{u_{10} } \\
{z_1} & {1} & { } & {1} & { }   & { }  & {1} & { } & {1} & { } & { } \\
{z_2} & { } & {1} & {1} & { }   & { }  & { } & {1} & {1} & { } & { } \\
{z_3} & { } & { } & {1} & { }   & { }  & { } & { } & {1} & { } & {1} \\
{z_4} & {1} & { } & {1} & {1}   & {1}  & { } & {1} & { } & { } & {1} \\
{z_5} & { } & {1} & {1} & {1}   & {1}  & {1} & { } & { } & { } & {1} \\
{z_6} & { } & { } & { } & {1}   & {1}  & {1} & {1} & {1} & { } & {1} \\
{z_7} & { } & {1} & { } & { }   & { }  & {1} & { } & {1} & {1} & { } \\
{z_8} & {1} & { } & { } & { }   & { }  & { } & {1} & {1} & {1} & { } \\
{z_9} & {1} & {1} & {1} & { }   & { }  & { } & { } & { } & {1} & { }
\end{array}\,.
\end{equation}
Далее осуществляем процедуру редукции эквивалентными
преобразованиями по леммам~\ref{lem4},~\ref{lem5}. Заметим (см.
табл.~\ref{tabkow1}), что аргумент $u_5 $ всегда относится ко второй
группе. Добьемся того, что он останется в единственной строке. Для
этого преобразуем строки $z_5 ,z_6 $ в соответствии с
леммой~\ref{lem5}: $z_5  \to z_5  \oplus z_4 $, $z_6  \to z_6 \oplus
z_4$. Теперь $u_5$ входит только в $z_4$, и по лемме \ref{lem4} эти
столбец и строку можно исключить, не меняя классов эквивалентности.
Аргумент $u_4$ стал фиктивным и подлежит исключению. Аргумент
$u_{10}$ также всегда относится ко второй группе и, после
выполненных преобразований, входит только в строку $z_3$, поэтому
$z_3$ и $u_{10}$ исключаем. Замены $z_8  \to z_8  \oplus z_5 \oplus
z_7 $, $z_9  \to z_9 \oplus z_6  \oplus z_7 $, $z_6  \to z_6 \oplus
z_1 $, $z_5  \to z_5 \oplus z_1  \oplus z_2 $  делают константами
компоненты $z_8 ,z_9 ,z_5 ,z_6$. Исключаем их. Для наглядности
выполним еще подстановки $z_1 \to z_1 \oplus z_7 $, $z_2 \to z_2
\oplus z_7 $. В результате получим матрицу редуцированной АБВФ:
\begin{equation}\label{eq4_38}
\begin{array}{c||c  c  c  c c c c ||} \multicolumn{1}{c}{} &
\multicolumn{1}{c}{u_1 } & \multicolumn{1}{c}{u_2 } &
\multicolumn{1}{c}{u_3 } & \multicolumn{1}{c}{u_6 } &
\multicolumn{1}{c}{u_7 } & \multicolumn{1}{c}{u_8 } &
\multicolumn{1}{c}{u_9 } \\
{z_1} & {1} & { } & { } & { }   & {1}  & {1} & {1} \\
{z_2} & { } & {1} & { } & {1}   & { }  & {1} & {1} \\
{z_7} & { } & { } & {1} & {1}   & {1}  & { } & {1}
\end{array}\,.
\end{equation}
Ясно, что ввиду наличия единичного блока по аргументам $u_1, u_2,
u_3$ и того факта, что ни один из этих аргументов не может всегда
относиться ко второй группе, дальнейшая редукция одновременно для
всех областей, указанных в табл.~\ref{tabkow1}, невозможна. Поэтому
выполним преобразования в зависимости от области, которой
принадлежат постоянные интегрирования.

В областях \ts{I}, \ts{III}, \ts{IV} аргумент $u_3$ относится ко
второй группе. Исключаем $z_7 ,u_3$. Получим матрицу
\begin{equation}\notag
\begin{array}{c||c  c  c c c c ||} \multicolumn{1}{c}{} &
\multicolumn{1}{c}{u_1 } & \multicolumn{1}{c}{u_2 } &
\multicolumn{1}{c}{u_6 } & \multicolumn{1}{c}{u_7 } &
\multicolumn{1}{c}{u_8 } &
\multicolumn{1}{c}{u_9 } \\
{z_1} & {1} & { } & { } & {1}   & {1}  & {1} \\
{z_2} & { } & {1} & {1} & { }   & {1}  & {1}
\end{array}\,.
\end{equation}
В области \ts{III} аргумент $u_6$ относится ко второй группе,
поэтому здесь исключаем $z_2 ,u_6$. Остается одна компонента
\begin{equation}\label{eq4_39}
z_1  = u_1  \oplus u_7  \oplus u_8  \oplus u_9 .
\end{equation}
В областях \ts{I}, \ts{IV} ко второй группе относится аргумент
$u_9$, присутствующий в обеих компонентах $z_1 ,z_2 $. Поэтому
выполним подстановку $z_1  \to z_1  \oplus z_2 $, после чего
исключим $u_9,z_2$. Остается одна компонента
\begin{equation}\label{eq4_40}
z_1  = u_1  \oplus u_2  \oplus u_6  \oplus u_7.
\end{equation}
Все оставшиеся аргументы в (\ref{eq4_39}), (\ref{eq4_40}) относятся
к первой группе. В областях \ts{III}, \ts{IV} ограничений на
аргументы нет, поэтому оставшаяся компонента принимает оба значения
и они порождают разные классы эквивалентности. В области \ts{I}
корни $e_1,e_2$ комплексно сопряжены, поэтому имеют место
ограничения (см. замечание~\ref{rem1}) $u_1  \oplus u_2 = 0$, $u_6
\oplus u_7 = 0$ и функция (\ref{eq4_40}) принимает только одно
значение, так что класс эквивалентности один.

В области \ts{II} ко второй группе относится аргумент $u_8$, а в
области \ts{V} -- аргумент $u_6$. Выполним поэтому в матрице
(\ref{eq4_38}) следующие подстановки: вначале $z_2 \to z_2 \oplus
z_1 $, затем $z_7  \to z_7  \oplus z_2$. Получим
\begin{equation}\label{eq4_41}
\begin{array}{c||c  c  c  c c c c ||} \multicolumn{1}{c}{} &
\multicolumn{1}{c}{u_1 } & \multicolumn{1}{c}{u_2 } &
\multicolumn{1}{c}{u_3 } & \multicolumn{1}{c}{u_6 } &
\multicolumn{1}{c}{u_7 } & \multicolumn{1}{c}{u_8 } &
\multicolumn{1}{c}{u_9 } \\
{z_1} & {1} & { } & { } & { }   & {1}  & {1} & {1} \\
{z_2} & {1} & {1} & { } & {1}   & {1}  & { } & { } \\
{z_7} & {1} & {1} & {1} & { }   & { }  & { } & {1}
\end{array}\,.
\end{equation}
Теперь в области \ts{II} исключаем $z_1 ,u_8$. Компонента $z_2  =
u_1  \oplus u_2  \oplus u_6  \oplus u_7 $. Но здесь также имеют
место ограничения (\ref{eq4_31}), поэтому эта компонента фиктивна, а
для последней оставшейся компоненты получаем $z_7  = u_3 \oplus
u_9$. Оба аргумента из первой группы, ограничениями не связаны,
поэтому классов эквивалентности два. В области \ts{V} из
(\ref{eq4_41}) исключаем $z_2 ,u_6$. Получаем матрицу редуцированной
АБВФ
\begin{equation}\notag
\begin{array}{c||c  c  c  c c c ||} \multicolumn{1}{c}{} &
\multicolumn{1}{c}{u_1 } & \multicolumn{1}{c}{u_2 } &
\multicolumn{1}{c}{u_3 } & \multicolumn{1}{c}{u_7 } &
\multicolumn{1}{c}{u_8 } &
\multicolumn{1}{c}{u_9 } \\
{z_1} & {1} & { } & { } & {1}   & {1}  &  {1} \\
{z_7} & {1} & {1} & {1} & { }   & { }  &  {1}
\end{array}\,.
\end{equation}
Все аргументы из первой группы, ограничениями не связаны, поэтому
все наборы значений  пары $(z_1 ,z_7)$ порождают различные классы и
классов эквивалентности четыре.

Получено строгое доказательство теоремы \cite{Kh831} о количестве
торов Лиувилля.
\begin{theorem}
В случае интегрируемости Ковалевской регулярные многообразия
$\mF_\mbf{f}$ имеют следующий тип: ${\mbf{T}}^2 $ в области \ts{I},
$2{\mbf{T}}^2 $ в областях \ts{II}, \ts{III}, \ts{IV} и
$4{\mbf{T}}^2 $ в области \ts{V}.
\end{theorem}

\begin{figure}[ht]
\centering
\includegraphics[width=70mm,keepaspectratio]{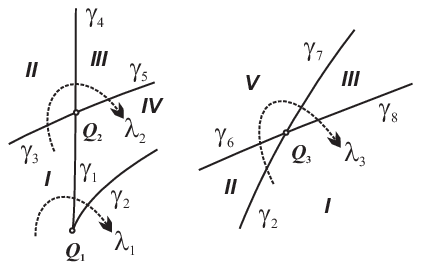}
\caption{Некоторые особые точки.}\label{fig_kow3}
\end{figure}

Как отмечалось в общей теории, критические интегральные поверхности
устанавливаются аналогично. Поскольку все результаты полностью
совпадают с описанием перестроек, данных в \cite{Kh831}, ограничимся
рассмотрением окрестностей трех особых точек, обозначенных на
рис.~\ref{fig_kow1} как $Q_1 - Q_3$. В увеличенном виде эти
окрестности приведены на рис.~\ref{fig_kow3}. Рассмотрим обходы
указанных точек по путям, обозначенным $\lambda_1 - \lambda_3$. При
этом пересекаются кривые -- участки разделяющих кривых
$\pi_3,\pi_5,\pi_6$, обозначенные через $\gamma_1 - \gamma_8$.
Теперь уже установлено, что все они принадлежат бифуркационной
диаграмме (так как интегральные поверхности не пусты). Необходимая
информация на этих участках собрана в табл.~\ref{tabkow2}. В
столбцах~4 и 5 через $e_*$ обозначен кратный корень, значение
которого ясно из столбца~3.
\begin{table}[ht]
\centering
\begin{tabular}{|c|c|c|c|c|}
\multicolumn{5}{r}{\fts{Таблица \myt\label{tabkow2}}}\\
\hline
\begin{tabular}{c}Переход,\\кривая\end{tabular} &\begin{tabular}{c}Корни\end{tabular}
&\begin{tabular}{c}Область\\изменения $s_1$\end{tabular}
&\begin{tabular}{c}Область\\изменения $s_2$\end{tabular}&\begin{tabular}{c}Первая\\группа\end{tabular} \\
\hline \ts{I}-\ts{IV}, $\gamma_1$ &$e_4<e_1<e_2=e_3<e_5$ &
$[\,e_*,e_5\,]$ & $[-\infty,e_4\,]$ &
14678\\
\hline \ts{IV}-\ts{I}, $\gamma_2$& $e_4<e_1=e_2<e_3<e_5$ &
$[\,e_3,e_5\,]$ & $[-\infty,e_4\,]$ &
124678\\
\hline \ts{I}-\ts{II}, $\gamma_3$&$e_3=e_4<e_5$ & $[\,e_*,e_5\,]$ &
$[-\infty,e_*\,]$ &
$1267^*$\\
\hline \ts{II}-\ts{III}, $\gamma_4$&$e_1<e_4<e_2=e_3<e_5$ &
$[\,e_*,e_5\,]$ & $[-\infty,e_1\,]$ &
14789\\
\hline \ts{III}-\ts{IV}, $\gamma_5$&$e_4=e_1<e_2<e_3<e_5$ &
$[\,e_3,e_5\,]$ & $[-\infty,e_*\,]$ &
12478\\
\hline \ts{II}-\ts{V}, $\gamma_6$&$e_1=e_2<e_3<e_4<e_5$ &
$[\,e_4,e_5\,]$ & $[-\infty,e_*\,]$ &
$12389^{(*)}$\\
\hline \ts{V}-\ts{III}, $\gamma_7$&$e_1<e_2<e_3=e_4<e_5$ &
$[\,e_*,e_5\,]$ & $[-\infty,e_1\,]$ &
12789\\
\hline \ts{III}-\ts{I}, $\gamma_8$&$e_1=e_2<e_4<e_3<e_5$ &
$[\,e_3,e_5\,]$ & $[-\infty,e_*\,]$ &
$12489^{(*)}$\\
\hline
\end{tabular}
\end{table}
Последний столбец требует пояснений. На кривой $\gamma_1$ кратный
корень лежит на границе изменения $s_1$. При этом в реальном времени
либо $s_1 \equiv e_*$, либо $s_1 \to e_*$ при $t \to \pm \infty$. Но
в топологическом аспекте мы рассматриваем надстройки над отрезком
$[\,e_*,e_5\,]$ в виде петель, поэтому считаем $t=\infty$ обычным
значением. В силу этого радикалы $R_{12},R_{13}$ считаем меняющими
знак. На кривой $\gamma_2$ кратный корень возникает за пределами
изменения переменных $s_1,s_2$. Иногда это означает (как, например,
при переходе между областями \ts{III-a} и \ts{III-b}), что
бифуркации не происходит. Чтобы это проверить, нужно избавиться от
устранимой особенности в выражениях (\ref{eq4_27}), (\ref{eq4_36}),
связанной с тем, что $\varphi'(e_*)=0$. При этом надо рассматривать
два варианта выбора знаков радикалов. В данном случае оказывается,
что на одном из торов Лиувилля одна из образующих стягивается в
точку. Это аналитически выражается в следующем: в одном из вариантов
предельного перехода $\omega_2$ обращается в тождественный нуль, а
$\omega_1$ в константу. Все сказанное {\it не влияет} на количество
связных компонент интегральной поверхности, и исследование
соответствующих деталей здесь просто не нужно. Существенно же
следующее. Радикалы $R_{11},R_{12}$, равно как и $R_{21},R_{22}$,
формально получают одинаковое выражение $\sqrt{s_i-e_*}$. Однако при
этом знаки в каждой паре можно выбирать независимо четырьмя
способами, в связи с чем все четыре радикала отнесены к первой
группе. На кривой $\gamma_3$ звездочка в последнем столбце означает,
как и ранее, дополнительные условия (\ref{eq4_31}). Аналогично
устанавливаются радикалы первой группы на кривых $\gamma_4$,
$\gamma_5$, $\gamma_7$. На кривых $\gamma_6$, $\gamma_8$ кратный
корень включен в область изменения $s_2$, поэтому радикалы
$R_{21},R_{22}$ необходимо считать меняющими знак (в точке
$t=\infty$). Но для переменной $s_1$ соответствующие радикалы
$R_{11},R_{12}$ можно рассматривать двояко. При выходе на
разделяющую кривую из областей \ts{III} и \ts{V} они возникли из
радикалов, не меняющих знак, поэтому должны унаследовать четыре
набора знаков. Но при выходе на те же участки из областей \ts{I} и
\ts{II} они возникают на месте комплексно сопряженной пары, поэтому
нужно считать, что $R_{21} R_{22}>0$. Такая двойственность в
рассуждениях, конечно же, недопустима, и на участках $\gamma_6$,
$\gamma_8$ мы увидим, как алгоритм редукции АБВФ уничтожает эту
неопределенность. Пока же столбец, описывающий первую группу, для
этих случаев помечен звездочкой в скобках, то есть звездочка здесь
не обязательна -- результат от этого не зависит. Применяем описанную
ранее технику. Все действия основаны на леммах~\mbox{\ref{lem4},
\ref{lem5}} с учетом замечания~\ref{rem2}. Исходим во всех случаях
из матрицы (\ref{eq4_38}), при построении которой не требовался учет
конкретного распределения корней $e_\gamma$.

На $\gamma_1$ последовательно исключаем пары $(u_2,z_2)$,
$(u_3,z_7)$, $(u_7,z_1)$. Исключены все компоненты функции. По
замечанию~\ref{rem2} класс эквивалентности один.

На $\gamma_2$, $\gamma_3$ выполняем те же действия, что и в областях
\ts{II}, \ts{IV}: исключаем $(u_3,z_7)$, подставляем $z_1 \to z_1
\oplus z_2$, исключаем $(u_9,z_2)$. Остается компонента
(\ref{eq4_40}), зависящая только от аргументов первой группы. На
$\gamma_2$ классов эквивалентности два, а на $\gamma_3$ оставшаяся
компонента -- константа, поэтому класс эквивалентности один.

На $\gamma_4$ исключаем $(u_2,z_2)$, затем $(u_3,z_7)$. Остается
компонента (\ref{eq4_39}). Классов эквивалентности два.

На $\gamma_5$ исключаем $(u_3,z_7)$, затем $(u_6,z_2)$ и, наконец,
$(u_9,z_1)$. Класс эквивалентности один.

На $\gamma_6$ исключаем $(u_6,z_2)$ и $(u_7,z_1)$. Остается
$z_7=(u_1\oplus u_2)\oplus u_3\oplus u_9$. Независимо от того,
имеется или нет условие $u_1 \oplus u_2=0$, аргументы $u_3,u_9$ из
первой группы дают два значения $z_7$, порождающих два различных
класса эквивалентности.

На $\gamma_7$ исключаем $(u_3,z_7)$, $(u_6,z_2)$. Остается
$z_1=u_1\oplus u_7\oplus u_8\oplus u_9$. Классов эквивалентности
два.

На $\gamma_8$ последовательно исключаем пары $(u_3,z_7)$,
$(u_6,z_2)$, $(u_7,z_1)$. По замечанию~\ref{rem2} класс
эквивалентности один. Вопрос о необходимости применения ограничений
отпадает.

Теперь перестройки вдоль путей $\lambda_1-\lambda_3$, приведенных на
рис.~\ref{fig_kow3}, выписываются однозначно:
\begin{equation}\notag
\begin{array}{l}
\lambda_1: \bT^2 \to (S^1 \vee S^1) \times S^1 \to 2\bT^2 \to \bT^2
\cup S^1
\to \bT^2; \\[3mm]
\lambda_2: \bT^2 \to (S^1 \vee S^1) \times S^1 \to 2\bT^2 \to 2(S^1
\vee S^1) * S^1 \to
\\
\qquad \to 2\bT^2 \to (S^1 \ddot \vee S^1)\times S^1 \to 2\bT^2;\\[3mm]
\lambda_3: \bT^2 \to (S^1 \vee S^1) \times S^1 \to 2\bT^2 \to 2(S^1
\vee S^1) \times S^1
\to\\
\qquad \to 4 \bT^2 \to 2(S^1 \vee S^1) \times S^1 \to 2\bT^2 \to (S^1 \vee S^1) \times S^1 \to \bT^2.\\
\end{array}
\end{equation}

На этом топологический анализ задачи Ковалевской закончен. Отметим
еще раз, что все эти результаты ранее доказаны в \cite{Kh831, KhBk}
достаточно сложными вычислениями, которые некоторыми современными
авторами называются <<утомительными>> \cite{Oden}. Здесь же задача
сведена к простым действиям с двоичными матрицами.

В следующей части статьи будут приведены исследования новых случаев
существования алгебраических решений для критических подсистем
волчка Ковалевской в двойном поле.

{\bf Благодарности.} Работа выполнена при финансовой поддержке
грантов РФФИ № 10-01-00043 (применение булевых функций и их
редукция), РФФИ и Администрации Волгоградской области № 10-01-97001
(разработка и реализация вычислительных алгоритмов).

\end{document}